\definecolor{c0}{HTML}{1d1d1d}
\definecolor{c1}{HTML}{173359}
\definecolor{c2}{HTML}{8f5a39}
\definecolor{c3}{HTML}{a8332b}
\definecolor{c4}{HTML}{4c825c}
\algrenewcommand{\algorithmiccomment}[1]{\hfill \textcolor{gray}{$\vartriangleright$ \textit{#1}}}
\algrenewcommand\alglinenumber[1]{\footnotesize #1:}
\algorithmic\endcsname{\itemsep\z@}{\itemsep=5pt}{}{}
\newtheorem{theorem}{Theorem}
\numberwithin{theorem}{section}
\newtheorem{lemma}[theorem]{Lemma}
\newtheorem{problem}[theorem]{Problem}
\newtheorem{remark}[theorem]{Remark}
\newtheorem{imptheorem}[theorem]{Imported Theorem}
\theoremstyle{definition}
\newtheorem{assumption}[theorem]{Assumption}
\numberwithin{equation}{section}
\let\originalleft\left
\let\originalright\right
\renewcommand{\left}{\mathopen{}\mathclose\bgroup\originalleft}
\renewcommand{\right}{\aftergroup\egroup\originalright}
\DeclareMathOperator{\range}{range}
\newcommand{\bv}[1]{\mathbf{#1}}
\newcommand{\T}{{\scalebox{.7}{$\mathsf{T}$}}}
\newcommand{\F}{{\scalebox{.7}{$\mathsf{F}$}}}
\newcommand{\E}{\mathbb{E}}
\renewcommand{\Pr}{\mathbb{P}}
\newcommand{\condF}{\kappa_{\F}}
\newcommand{\cond}{\kappa}
\newcommand{\const}{\mathrm{const}}
\newcommand{\kibitz}[2]{\ifnum\Comments=1\textcolor{#1}{#2}\fi}
\title{A simple analysis of a quantum-inspired algorithm for solving low-rank linear systems}
\author{
Tyler Chen\thanks{Equal contribution: \{\texttt{tyler.chen}, \texttt{lyle.kim}, \texttt{archan.ray}\}\texttt{@jpmchase.com}.}\and
Junhyung Lyle Kim\footnotemark[1]\and
Archan Ray\footnotemark[1]\and
Shouvanik Chakrabarti\and
Dylan Herman\and
Niraj Kumar\and
\\
Global Technology Applied Research, JPMorganChase, New York, NY 10001, USA
}
\date{}
\begin{document}
\sloppy

\maketitle

\begin{abstract}
We describe and analyze a simple algorithm for sampling from the solution $\bv{x}^* := \bv{A}^+\bv{b}$ to a linear system $\bv{A}\bv{x} = \bv{b}$.
We assume access to a sampler which allows us to draw indices proportional to the squared row/column-norms of $\bv{A}$.
Our algorithm produces a compressed representation of some vector $\bv{x}$ for which $\|\bv{x}^*  - \bv{x}\| < \varepsilon \|\bv{x}^* \|$ in $\widetilde{O}(\condF^4 \cond^2 / \varepsilon^2)$ time, where $\condF := \|\bv{A}\|_\F\|\bv{A}^{+}\|$ and $\cond := \|\bv{A}\|\|\bv{A}^{+}\|$.
The representation of $\bv{x}$ allows us to query entries of $\bv{x}$ in $\widetilde{O}(\condF^2)$ time and sample proportional to the square entries of $\bv{x}$ in $\widetilde{O}(\condF^4 \cond^6)$ time, assuming access to a sampler which allows us to draw indices proportional to the squared entries of any given row of $\bv{A}$.
Our analysis, which is elementary, non-asymptotic, and fully self-contained, simplifies and clarifies several past analyses from literature including [Gily\'en, Song, and Tang; 2022] and [Shao and Montanaro; 2022, 2023].
\end{abstract}

\section{Introduction}

In this paper, we consider algorithms for solving a (dense) linear system
\begin{equation}
    \label{eqn:linear_system}
    \bv{A}\bv{x} = \bv{b}, 
    \quad\text{where}\quad
    \bv{A}\in\mathbb{R}^{n\times d}
    ,\quad 
    \bv{b}\in\mathbb{R}^n
    ,\quad
    \bv{b}\in\range(\bv{A}).
\end{equation}
Specifically, we are interested in the minimum norm solution $\bv{x}^* = \bv{A}^+\bv{b}$, where $\bv{A}^+$ is the pseudoinverse of $\bv{A}$.

The most typical access model for \cref{eqn:linear_system} assumes that, given indices $r$ and $c$, we can observe the $(r,c)$-th entry of $\bv{A}$ and the $r$-th entry of $\bv{b}$ at unit cost, but that no other information about $\bv{A}$ is known a priori. 
In this case, classical algorithms from numerical linear algebra solve the system in $O(nd^2)$ time \cite{trefethen_bau_22}.
Randomized algorithms are able to solve \cref{eqn:linear_system}
in the optimal $O(nd + d^{\omega})$ time, where $\omega < 2.371\ldots$ is the matrix-multiplication exponent \cite{chenakkod_derezinski_dong_rudelson_24}. 
In the case an $\varepsilon$-approximate solution is acceptable, full-gradient methods like conjugate gradient (on the normal equations) or heavy ball momentum run in $O(nd \cond\log(1/\varepsilon))$ time, where $\cond := \|\bv{A}\|\|\bv{A}^{+}\|$ is the \emph{condition number} of $\bv{A}$.
Such methods make use of a full matrix-vector product each iteration, which leads to the $O(nd)$ dependence \cite{saad_03}.

If we are given fast access to certain key information about $\bv{A}$, then we might hope to design faster algorithms.
For instance, the randomized Kaczmarz algorithm \cite{strohmer_vershynin_08} assumes that we can sample indices proportional to the square row-norms of $\bv{A}$ at unit cost.
Given such access, randomized Kaczmarz runs in $O(d\condF^2\log(1/\varepsilon))$ time, where $\condF := \|\bv{A}\|_\F\|\bv{A}^{+}\|$ is the \emph{Demmel condition number} \cite{strohmer_vershynin_08}.
It always holds that $\condF^2 \leq \operatorname{rank}(\bv{A})\cond^2$, and hence randomized Kaczmarz may provide acceleration over full-gradient methods on problems for which $\condF$ is sufficiently small.

Motivated by the potential speedups offered by randomized Kaczmarz and related algorithms, one might wonder whether we can further reduce the dependence on the dimension $d$.
At first glance, this is not possible; even writing down an approximate solution to \cref{eqn:linear_system} requires $\Omega(d)$ time.
However,  we can hope to learn some information about the solution $\bv{x}$ in $o(d)$ time.
For instance, we might be able to learn a few particular entries of $\bv{x}$, or even the ``most important'' entries of $\bv{x}$.

\paragraph{Notation.}
The $i$-th standard basis vector is $\bv{e}_i$ and the identity is $\bv{I}$.
The spectral and Frobenius norms of a matrix $\bv{M}$ are $\|\bv{M}\|$  and $\|\bv{M}\|_\F$ respectively.
The Euclidian norm of a vector $\bv{x}$ is $\|\bv{x}\|$.
The value ``$\const$'' is an absolute constant that may change at every instance.
Given a distribution $\mathcal{D}$ and an integer $I$, $\mathcal{D}^I$ denotes distribution on length-$I$ tuples where each entry is drawn independently from $\mathcal{D}$. 
We use  $\widetilde{O}(\,\cdot\,)$ to hide poly-logarithmic dependencies in the input parameters.

\subsection{Access model, goal, and quantum inspired algorithms}

Concretely, the focus of this paper is on the following problem:
\begin{problem}\label{problem:sampling}
Find (a representation of) some $\bv{x}$, for which $\|\bv{A}^{+}\bv{b} - \bv{x}\| < \varepsilon \|\bv{A}^+\bv{b}\|$, that efficiently admits the following access:
\begin{itemize}[itemsep=.1em,topsep=-.5em]
    \item Given an index $\ell$, return $\bv{x}_\ell$ 
    \item Sample an index $\ell\in\{1, \ldots d\}$ according the distribution $\mathcal{D}_{\bv{x}}$:
    \begin{equation*}
    \ell\sim \mathcal{D}_\bv{x}
    \qquad\Longleftrightarrow\qquad 
    \Pr[\ell = j]
    = \frac{|\bv{e}_j^\T\bv{x}|^2}{\|\bv{x}\|^2}
    ,\quad j=1,\ldots, d.
    \end{equation*}
\end{itemize}
\end{problem}

\Cref{problem:sampling} originates from the study of \emph{quantum-inspired} algorithms, which can be viewed as a classical analog of quantum algorithms; see \cite{tang_23} for an overview.
In short, while many quantum algorithms attained exponential speedups over the (at the time) best-known classical algorithms, the type of solutions they produced were not entirely comparable to the solutions produced by classical algorithms.
For instance, in the context of linear systems, the famous Harrow-Hassidim-Lloyd (HHL) algorithm is often stated to be exponentially faster than classical algorithms for solving linear systems; the runtime of HHL scales with the dimension $d$ as $O(\log(d))$ while classical algorithms scale as $\Omega(d)$ \cite{harrow_hassidim_lloyd_09}.
However, the sense in which HHL `solves' a linear system is more analogous to \cref{problem:sampling} than producing an approximation to the solution vector $\bv{x}^*$.
Thus, it is unfair to compare HHL to algorithms which are designed to produce the entire solution vector.


As noted above, there is no hope of efficiently solving \cref{problem:sampling} unless we assume some sort of useful access to $\bv{A}$.
In the context of quantum-inspired algorithms, a typical access assumption is the following:

\begin{assumption}[Sample and query access]
\label{asm:sample_query}
We assume that we can \emph{query} key properties about $\bv{A}$ and $\bv{b}$ at unit cost:
given indices $r$ and $c$, we can observe $\bv{e}_r^\T \bv{A} \bv{e}_{c}$, $\bv{b}_r$, $\|\bv{e}_r^\T\bv{A}\|^2$, and $\|\bv{A}\bv{e}_c\|^2$.
We also assume that we can \emph{sample} according to certain importance distributions relating to $\bv{A}$ at unit cost:
\begin{itemize}[itemsep=.1em,topsep=-.5em]
\item We can sample a row of $\bv{A}$ proportional to the squared row norm distribution $\mathcal{D}^{\textup{row}}$:
\begin{equation*}
r\sim  \mathcal{D}^{\textup{row}}
\qquad\Longleftrightarrow\qquad 
p_r := \Pr[ r = i ]
= 
\frac{\| \bv{e}_i^\T\bv{A} \|^2}{\|\bv{A}\|_\F^2} 
,\quad i=1,\ldots, n.
\end{equation*}
\item We can sample a column of $\bv{A}$ proportional to the squared column norm distribution $\mathcal{D}^{\textup{col}}$:
\begin{equation*}
c\sim  \mathcal{D}^{\textup{col}}
\qquad\Longleftrightarrow\qquad 
q_c := \Pr[ c = j ]
= 
\frac{\| \bv{A}\bv{e}_j \|^2}{\|\bv{A}\|_\F^2} 
,\quad j=1,\ldots, d.
\end{equation*}
\item For each $r$, we can sample a column from the entries of the row $\bv{A}^\T\bv{e}_r$ proportional to the squared magnitude distribution $\mathcal{D}_r^{\textup{col}}$:
\begin{equation*}
c\sim  \mathcal{D}_r^{\textup{col}}
\qquad\Longleftrightarrow\qquad 
\Pr[ c = j ]
= 
\frac{| \bv{e}_r^\T \bv{A} \bv{e}_j|^2}{\|\bv{e}_r^\T\bv{A}\|^2}
,\quad j=1,\ldots, d.
\end{equation*}
\end{itemize}
\end{assumption}

We assume sampling/querying can be done at unit cost for ease of exposition.\footnote{
%
After all, we (and most other papers in the area) already abstract away arguably more important costs such as the bit complexity required for the algorithms to work in finite arithmetic!
}
Of course, building such a sampler from scratch requires at least $\operatorname{nnz}(\bv{A})$ time.
For a more detailed overview of data-structure for sample and query access which are analogous to those implementable with quantum RAM (QRAM), see \cite{tang_19,tang_23}.

We note that \cref{asm:sample_query} is only a mild generalization of what is typically assumed when analyzing randomized Kaczmarz and related algorithms \cite{strohmer_vershynin_08}. 
As such, we believe that in most situations where row-norm sampling is reasonable, the sample/query access \cref{asm:sample_query} is also reasonable.
Alternately, as with randomized Kaczmarz type algorithms, one can use uniform sampling.
The same general analysis approach as we take in this paper can be easily extended to uniform sampling by introducing a parameter quantifying the distance of the row-norms from being uniform \cite{needell_srebro_ward_14}. 

\subsubsection{Sampling from a compressed representation}
\label{sec:compressed_representation}

In order to solve \cref{problem:sampling}, we follow \cite{gilyn_song_tang_22,shao_montanaro_22,shao_montanaro_23} and  produce an approximation of the form  $\bv{x} = \bv{A}^\T\bv{y}$, where $\bv{y}$ is some sparse vector that we efficiently maintain.
Clearly, we can query an entry of $\bv{x}$ efficiently by simply reading the relevant entries of the relevant rows of $\bv{A}$.
The following guarantees that we can also sample from $\bv{x}$.

\begin{imptheorem}[\protect{Proposition 4.3 in \cite{tang_19}}]
\label{thm:sampling}
Instantiate \cref{asm:sample_query}.
Given a vector $\bv{y}\in\mathbb{R}^n$, define
\begin{equation}
    \phi(\bv{y}) := \operatorname{nnz}(\bv{y})\sum_{r=1}^{n}  \frac{|\bv{y}_r|^2\|\bv{e}_r^\T\bv{A}\|^2}{\|\bv{A}^\T\bv{y}\|^2}.
\end{equation}
There is an algorithm that produces a sample from $\mathcal{D}_{\bv{x}}$, where $\bv{x} = \bv{A}^\T\bv{y}$, using an expected $O(\operatorname{nnz}(\bv{y})\phi(\bv{y}))$ cost.
\end{imptheorem}

The quantity $\phi(\bv{y})$ can be viewed as a measure of how much large entries of the relevant rows of $\bv{A}$ correspond to large entries of $\bv{x} = \bv{A}^\T\bv{y}$, which intuitively is relevant for sampling from $\bv{x}$ given $\bv{y}$ and a way to sample from $\bv{A}$.
Indeed, while large entries of $\bv{x}$ must come from a combination of large entries of $\bv{y}$ and large entries in the corresponding rows of $\bv{A}$, due to cancellation, small entries of $\bv{x}$ can also arise from the combination of a large entries of $\bv{y}$ and  large entries in the corresponding row of $\bv{A}$.
We provide a detailed explication of the approach in \cref{sec:vector_sampling}.

\subsection{Past work}


A number of quantum-inspired algorithms for \cref{problem:sampling} and adjacent problems have been developed based on varying techniques \cite[etc.]{chia_gilyen_andraspal_li_lin_tang_wang_22,chepurko_clarkson_horesh_lin_woodruff_22,bakshi_tang_24}.
Most related to the present work is a sequence of papers which aim to approximate gradient descent type algorithms \cite{gilyn_song_tang_22,shao_montanaro_22,shao_montanaro_23}. 
Specifically, these algorithms maintain a sparse iterate $\bv{y}_k$ in the dual space such that $\E[\bv{A}^\T\bv{y}_k]$ is equal to a gradient decent algorithm.
In contrast to standard stochastic gradient decent/randomized Kacmzarz type iterations which use a row-submatrix of $\bv{A}$, each iteration of the algorithms in \cite{gilyn_song_tang_22,shao_montanaro_22,shao_montanaro_23} is performed by sampling both rows and columns of $\bv{A}$.

In \cite{gilyn_song_tang_22}, an algorithm for a more general problem of ridge-regression is introduced. 
For \cref{problem:sampling} studied in the present paper, the algorithm of \cite{gilyn_song_tang_22} builds the representation of the solution in $\widetilde{O}(\condF^6\cond^2 / \varepsilon^4)$ time.
Samples can be subsequently generated in $\widetilde{O}(\condF^6\cond^6 / \varepsilon^4)$ time.
The analysis is simple and mostly non-asymptotic with explicit constants.
With regards to simplicity and elegance, we identify only two minor shortcomings.
First, the algorithm requires that $\bv{b}$ is explicitly sparsified as a pre-processing step, which is somewhat unsatisfying.
Second, the iterate $\bv{A}^\T\bv{y}_k$, conditioned on the row-samples used, is not equivalent to the widely studied randomized Kaczmarz method \cite{strohmer_vershynin_08}.
As such, the analysis follows a much more general analysis of stochastic gradient algorithms with stochastic gradients satisfying an additive error bound (whereas the stochastic gradients used by randomized Kaczmarz enjoy stronger multiplicative guarantees for consistent systems).

Subsequently, \cite{shao_montanaro_22} introduce an algorithm that is conditionally equivalent to randomized Kaczmarz with averaging \cite{moorman_tu_molitor_needell_20}, and hence, in principle, admits a simple analysis. 
The algorithm, which samples multiple rows each iteration, is claimed to achieve an improved construction time $\widetilde{O}(\condF^6\cond^2 / \varepsilon^2)$ and $\widetilde{O}(\condF^6\cond^4)$ sample time.\footnote{In \cite[footnote 10]{gilyn_song_tang_22}, it is claimed that their analysis can recover a similar result to \cite{shao_montanaro_22} in the case of consistent systems.}
Subsequently, in an update \cite{shao_montanaro_23}, these rates are improved to $\widetilde{O}(\condF^4\cond^2 / \varepsilon^2)$ and $\widetilde{O}(\condF^4\cond^2)$ respectively.
However, the analysis in \cite{shao_montanaro_22,shao_montanaro_23} relies heavily on big-$O$ notation and is difficult to follow since many important quantities are not clearly defined.
Moreover, as we expand on in \cref{sec:SM_error}, we were unable to verify several key steps in \cite{shao_montanaro_22,shao_montanaro_23}. 
In particular, we believe (i) the construction cost is actually $O(\condF^6/\varepsilon^2)$ and (ii) the analysis of the sample cost is incorrect, as it relies heavily on a ``fact'' about norms which we show is false.

Thus, to the best of our knowledge, the following question remains open:
\vspace{-1em}
\begin{quote}
\textbf{Question:}~
\textit{Is there a simple gradient-based algorithm that can easily be shown to solve \cref{problem:sampling} with both a quartic dependence on $\condF$ and quadratic dependence on $1/\varepsilon$?}
\end{quote}




\subsection{Main result and technical overview}
\label{sec:results}
The aim of the present work is to provide an \emph{elementary, non-asymptotic, and fully self-contained} proof that a simple gradient-based algorithm (\cref{alg:main}) solves \cref{problem:sampling} with
\begin{equation}
\label{eqn:rates}
{O}\left(\frac{\condF^4\cond^2}{\varepsilon^2}\right)
\text{ construction,}\quad
\widetilde{O}\left( \condF^2 \right) 
\text { query,}
\quad\text{and}\quad
\widetilde{O}\left( \condF^4 \cond^6 \right) \text{ sample cost}.
\end{equation}

\begin{algorithm}[ht!]
\caption{Regression via row and column sub-sampled stochastic gradient descent}
\label{alg:main}
\begin{algorithmic}[1]
\Require sample/query access to $\bv{A}$ and $\bv{b}$, step size $\alpha$, row/column batch sizes $R$, $C$, iterations $K$
\State set $\bv{y} = (y_1, \ldots, y_n) = \bv{0}$
\For{$k=1,2,\ldots, K$} \Comment{Update $\bv{y}$ from $\bv{y}_k$ to $\bv{y}_{k+1}$}
\State sample $C$ column indices $\mathcal{C}_k \sim (\mathcal{D}^{\textup{col}})^C$ 
\For{$c\in\mathcal{C}_k$}
\State compute $\displaystyle\beta_c := \sum_{i:  y_i\neq 0} A_{i,c} y_i$  
\Comment{$(\bv{e}_c^\T \bv{A}^\T)\bv{y}$}
\EndFor
\State sample $R$ row indices $\mathcal{R}_k \sim (\mathcal{D}^{\textup{row}})^R$
\For{$r\in\mathcal{R}_k$}
\State compute $\displaystyle\gamma_r := \left(\frac{1}{C}\sum_{c\in\mathcal{C}_{k}} (q_{c})^{-1} A_{r,c} \beta_c  \right) - b_r$
\Comment{$\displaystyle\left(\frac{1}{C}\sum_{c\in\mathcal{C}_{k}} (q_{c})^{-1} \bv{e}_r^\T \bv{A} \bv{e}_{c} \beta_c  \right) - \bv{e}_r^\T \bv{b}$}
\State update $\displaystyle y_r \gets y_r - \alpha \frac{\gamma_r}{R p_r} $
\EndFor
\EndFor
\Ensure compressed representation $\bv{y}$ of approximate solution $\bv{x} = \bv{A}^\T\bv{y}$
\end{algorithmic}
\end{algorithm}

We follow the same general strategy of \cite{gilyn_song_tang_22,shao_montanaro_22,shao_montanaro_23}, and analyze a stochastic gradient algorithm (\cref{alg:main}) for producing a compressed representation $\bv{y}$ of an approximate solution $\bv{x} = \bv{A}^\T\bv{y}$ to \cref{eqn:linear_system} that admits efficient sample and query access.
The main pieces we must analyze to obtain the bounds \cref{eqn:rates} are (i) the cost to construct $\bv{y}$, (ii) the accuracy of $\bv{x}$, and (iii) the cost to sample and query from $\bv{x}$.
In the remainder of this section, we provide an overview of our analysis of these three aspects.

As with \cite{shao_montanaro_22,shao_montanaro_23}, we make use of multiple row and column samples per iteration. 
However, unlike these past works, we reuse the same column samples for all rows.
As we note in \cref{rem:col_sampling}, this algorithmic advancement substantially decreases the running time of our algorithm.
Our focus is simplicity, so we do not consider the case that $\bv{b}\not\in\range(\bv{A})$ nor a more general problem like the ridge regression problem. 
We expect the ideas in this paper can be extended to these settings.

\paragraph{Construction.}
At each iteration $k$, we sample $R$ row indices and $C$ column indices, both independently with replacement from $\mathcal{D}^{\textup{row}}$ and $\mathcal{D}^{\textup{col}}$ respectively. 
The ordered lists $\mathcal{R}_k$ and $\mathcal{C}_k$ contain these indices. 
Given a step size parameter $\alpha$, we then produce an update
\begin{equation}
    \label{eqn:yk_update}
    \bv{y}_{k+1} = \bv{y}_k - \alpha \left( \frac{1}{R}\sum_{r\in\mathcal{R}_k} \frac{1}{p_{r}} \bv{e}_{r} \left(\left(\frac{1}{C}\sum_{c\in\mathcal{C}_{k}} \frac{1}{q_{c}} (\bv{e}_r^\T\bv{A} \bv{e}_{c}) (\bv{e}_c^\T \bv{A}^\T \bv{y}_k) \right) - \bv{e}_r^\T\bv{b} \right) \right),
    \qquad 
    \bv{y}_0 = \bv{0},
\end{equation}
where 
\begin{equation}
    p_r := \frac{\|\bv{e}_r^\T\bv{A}\|^2}{\|\bv{A}\|_\F^2}
    ,\qquad 
    q_c := \frac{\|\bv{A}\bv{e}_c\|^2}{\|\bv{A}\|_\F^2}
    .
\end{equation}
We will use the parameters:
\begin{equation}
\label{eqn:RCk_set}
\alpha = \frac{1}{\|\bv{A}\|^2}
,\qquad
R = \left\lceil \frac{2\condF^2}{\cond^2}\right\rceil,
\qquad
C = \left\lceil  \frac{10\condF^2}{\varepsilon^2}\right\rceil
,\qquad
K = \left\lceil 4 \cond^2 \cdot \log\left(\frac{1}{\varepsilon}\right)\right\rceil .
\end{equation}

This recurrence is efficient to maintain:
\begin{lemma}\label{thm:y_cost}
Suppose $\varepsilon<1/4$ and run \cref{alg:main} with parameters as in \cref{eqn:RCk_set} to obtain $\bv{y}$.
Then $\bv{y}$ has at most $8\condF^2 \log(1/\varepsilon)$ nonzero entries, and can be generated using $\widetilde{O}(\condF^4 \cond^2/\varepsilon^2)$ cost.
\end{lemma}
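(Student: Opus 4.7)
The plan is a direct bookkeeping argument: bound the growth of the support of $\bv{y}_k$ across iterations, then count operations per iteration and sum.

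First I would observe that the update $y_r \gets y_r - \alpha \gamma_r/(R p_r)$ in \cref{alg:main} modifies only the coordinates $r \in \mathcal{R}_k$; all other entries of $\bv{y}$ are untouched in iteration $k$. Starting from $\bv{y}_0 = \bv{0}$, induction then gives $\operatorname{nnz}(\bv{y}_k) \leq R k$, and in particular $\operatorname{nnz}(\bv{y}) \leq R K$. Plugging in the parameters from \cref{eqn:RCk_set} and using $\condF \geq \cond \geq 1$ together with $\varepsilon < 1/4$ (so $\log(1/\varepsilon) > 1$) to absorb the ceiling overhead into the constants yields the stated $8 \condF^2 \log(1/\varepsilon)$ bound.

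For the runtime, I would break iteration $k$ into four pieces: (i) drawing $C$ column and $R$ row samples costs $O(R+C)$ under \cref{asm:sample_query}; (ii) each $\beta_c = \sum_{i : y_i \neq 0} A_{i,c} y_i$ requires $\operatorname{nnz}(\bv{y}_k)$ entry queries of $\bv{A}$, giving $O(C \cdot \operatorname{nnz}(\bv{y}_k))$ across $c \in \mathcal{C}_k$; (iii) given the stored $\beta_c$'s, each $\gamma_r$ is a weighted sum of $C$ terms, costing $O(RC)$ in total; (iv) the $R$ coordinate updates to $\bv{y}$ cost $O(R)$. Hence the per-iteration cost is $O(C \cdot \operatorname{nnz}(\bv{y}_k) + RC)$. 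Using $\operatorname{nnz}(\bv{y}_k) \leq R k$ and summing over $k = 1, \ldots, K$ gives $O(C R K^2)$; substituting $C = O(\condF^2/\varepsilon^2)$, $R = O(\condF^2/\cond^2)$, and $K = O(\cond^2 \log(1/\varepsilon))$ yields $O(\condF^4 \cond^2 \log^2(1/\varepsilon)/\varepsilon^2) = \widetilde{O}(\condF^4 \cond^2/\varepsilon^2)$, as claimed.

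There is no real obstacle here; the argument is routine accounting. The only subtlety worth flagging is that we must \emph{store} the $\beta_c$'s so that they can be reused across all $r \in \mathcal{R}_k$ in step (iii); otherwise the per-iteration cost would suffer an extra factor of $R$ and inflate the final runtime. This reuse of column samples across rows is exactly the algorithmic improvement highlighted in \cref{rem:col_sampling}.
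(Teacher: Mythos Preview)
Your proposal is correct and follows essentially the same argument as the paper: bound $\operatorname{nnz}(\bv{y}_k)\le Rk$ by induction, then tally the per-iteration cost as $O(C\cdot\operatorname{nnz}(\bv{y}_k)+RC)=O(kRC)$ and sum to $O(K^2RC)$. Your breakdown into four pieces is slightly more explicit than the paper's, and your closing remark about storing the $\beta_c$'s is precisely the content of \cref{rem:col_sampling}.
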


\begin{proof}
At each iteration $k$, we update at most $R$ nonzero entries of our iterate (corresponding to the indices in $\mathcal{R}_k$). 
Hence, $\bv{y} = \bv{y}_{K+1}$ has at most $s = KR$ non-zero entries.

Each iteration requires computing $C$ inner products $\bv{e}_c^\T\bv{A}^\T\bv{y}$.
Since $\bv{y}$ is $s$-sparse, each requires $O(s)$ cost.
Subsequently, we require computing $R$ coefficients $\gamma_r$ each at a cost of $O(C)$.
Finally, we update $R$ entries of $\bv{y}$.
Thus, the cost at iteration $k$ is $O(RC+sC) = O(kRC)$ and the overall time up to iteration $K$ is $O(K^2RC)$.
\end{proof}

\begin{remark}\label{rem:col_sampling}
If we used different column indices for each row $r$ (as done in past work \cite{shao_montanaro_22,shao_montanaro_23}), then the cost per iteration would be higher by a factor of $R$, since there would now be $RC$ inner products.
\end{remark}

\paragraph{Accuracy.}
We must bound the sample sizes $R$ and $C$ and the iterations $K$ required to ensure that $\bv{x}$ is close to $\bv{x}^* = \bv{A}^+\bv{b}$.
The recurrence \cref{eqn:yk_update} gives updates
\begin{equation}
\label{eqn:xk_update}
\bv{x}_{k+1} = \bv{x}_k - \alpha \left( \frac{1}{R}\sum_{r\in\mathcal{R}_k} \frac{1}{p_{r}} \bv{A}^\T\bv{e}_{r} \left(\left(\frac{1}{C}\sum_{c\in\mathcal{C}_{k}} \frac{1}{q_{c}} (\bv{e}_r^\T\bv{A}\bv{e}_c)( \bv{e}_c^\T \bv{x}_k) \right) - \bv{e}_r^\T\bv{b} \right) \right),\qquad \bv{x}_0 = \bv{0}.
\end{equation}
Observe that $\E[\bv{x}_{k+1} | \bv{x}_k] = \bv{x}_k - \alpha(\bv{A} \bv{x}_k - \bv{b})$.
Hence \cref{eqn:yk_update} can be viewed as a compressed representation of a gradient descent type algorithm for the objective $\bv{x}\mapsto \|\bv{b} - \bv{A}\bv{x}\|^2$.
Moreover, $\E[\bv{x}_{k+1} | \mathcal{R}_k, \bv{x}_k] = \bv{x}_k - \alpha\sum_{r\in\mathcal{R}_k} \bv{A}^\T\bv{e}_r(\bv{e}_r^\T\bv{A}  \bv{x}_k - \bv{b}_r)$, which is randomized Kaczmarz with averaging \cite{moorman_tu_molitor_needell_20}.
In \cref{sec:x_bound} we follow the analysis of such methods in order to analyze the accuracy of $\bv{x}_{k+1}$.

\begin{restatable}{theorem}{mainx}
\label{thm:main_x}
Suppose $\varepsilon<1/4$ and run \cref{alg:main} with parameters as in \cref{eqn:RCk_set} to obtain $\bv{y}$.
Let $\bv{x} = \bv{A}^\T\bv{y}$. 
Then,
\begin{equation*}
\E\left[ \| \bv{x}^* - \bv{x} \|^2 \right]
\leq 2\varepsilon^2  \| \bv{x}^* \|^2.
\end{equation*}
\end{restatable}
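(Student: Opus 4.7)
My plan is to establish a one-step contraction for $E_k := \E[\|\bv{x}_k - \bv{x}^*\|^2]$ and unroll it for $K$ steps. Writing the update as $\bv{x}_{k+1} = \bv{x}_k - \alpha\, \hat{\bv{g}}_k$ and using the consistency assumption $\bv{b} = \bv{A}\bv{x}^*$, a direct computation (taking the $\mathcal{C}_k$ expectation first and then the $\mathcal{R}_k$ expectation) shows $\E[\hat{\bv{g}}_k \mid \bv{x}_k] = \bv{A}^\T\bv{A}(\bv{x}_k - \bv{x}^*)$, so expanding the square gives
\begin{equation*}
\E[\|\bv{x}_{k+1} - \bv{x}^*\|^2 \mid \bv{x}_k]
= \|\bv{x}_k - \bv{x}^*\|^2 - 2\alpha\, \|\bv{A}(\bv{x}_k - \bv{x}^*)\|^2 + \alpha^2\, \E[\|\hat{\bv{g}}_k\|^2 \mid \bv{x}_k].
\end{equation*}
The analysis thus reduces to bounding the second-moment term.

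\textbf{Key technical step.} I would decompose $\hat{\bv{g}}_k = \hat{\bv{g}}_k^{\mathrm{RK}} + \hat{\bv{g}}_k^{\mathrm{col}}$, where $\hat{\bv{g}}_k^{\mathrm{RK}}$ is the randomized-Kaczmarz-with-averaging stochastic gradient obtained by replacing the column-sampled estimator $\hat{\bv{A}}_{\mathcal{C}_k} := (1/C)\sum_{c\in\mathcal{C}_k}(1/q_c) \bv{A}\bv{e}_c\bv{e}_c^\T$ by its mean $\bv{A}$, and $\hat{\bv{g}}_k^{\mathrm{col}}$ is the residual. Since $\E_{\mathcal{C}_k}[\hat{\bv{A}}_{\mathcal{C}_k}] = \bv{A}$, the residual $\hat{\bv{g}}_k^{\mathrm{col}}$ is zero-mean conditional on $\mathcal{R}_k$, so the cross term vanishes and the second moment splits cleanly. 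For the Kaczmarz piece, the standard identity $\sum_r (1/p_r)\|\bv{A}^\T\bv{e}_r\|^2 t_r^2 = \|\bv{A}\|_\F^2 \sum_r t_r^2$ applied with $t_r = \bv{e}_r^\T\bv{A}(\bv{x}_k-\bv{x}^*)$ yields $\E[\|\hat{\bv{g}}_k^{\mathrm{RK}}\|^2 \mid \bv{x}_k] \le (\|\bv{A}\|_\F^2/R + \|\bv{A}\|^2)\|\bv{A}(\bv{x}_k-\bv{x}^*)\|^2$, which under $R = \lceil 2\condF^2/\cond^2\rceil$ is at most $(3/2)\|\bv{A}\|^2\|\bv{A}(\bv{x}_k-\bv{x}^*)\|^2$, so combining with the $-2\alpha\|\bv{A}(\bv{x}_k - \bv{x}^*)\|^2$ term leaves a net residual of $-(\alpha/2)\|\bv{A}(\bv{x}_k-\bv{x}^*)\|^2$. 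For the column piece, a direct second-moment calculation gives $\E[\|(\hat{\bv{A}}_{\mathcal{C}_k} - \bv{A})\bv{x}_k\|^2] \le \|\bv{A}\|_\F^2 \|\bv{x}_k\|^2/C$, and reapplying the row-sampling identity yields $\alpha^2 \E[\|\hat{\bv{g}}_k^{\mathrm{col}}\|^2 \mid \bv{x}_k] \le (3\condF^2/(2C\cond^2))\|\bv{x}_k\|^2 = O(\varepsilon^2/\cond^2)\|\bv{x}_k\|^2$ under the choice $C = \lceil 10\condF^2/\varepsilon^2 \rceil$.

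\textbf{Unrolling and the main obstacle.} Since $\bv{x}_0 = \bv{0}$ and each update leaves the iterate in $\range(\bv{A}^\T)$, I can use $\|\bv{A}(\bv{x}_k - \bv{x}^*)\|^2 \ge \|\bv{x}_k - \bv{x}^*\|^2/\|\bv{A}^+\|^2$ to turn the residual $-(\alpha/2)\|\bv{A}(\bv{x}_k-\bv{x}^*)\|^2$ into a $(1 - 1/(2\cond^2))E_k$ contraction. The triangle inequality $\|\bv{x}_k\|^2 \le 2\|\bv{x}_k - \bv{x}^*\|^2 + 2\|\bv{x}^*\|^2$, combined with a short induction showing $E_k \le \|\bv{x}^*\|^2$ throughout, turns the column-noise contribution into a uniform additive term of size $O(\varepsilon^2/\cond^2)\|\bv{x}^*\|^2$. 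Unrolling the resulting recurrence $K = \lceil 4\cond^2 \log(1/\varepsilon)\rceil$ times then gives $E_K \le \varepsilon^2\|\bv{x}^*\|^2 + O(\varepsilon^2)\|\bv{x}^*\|^2 \le 2\varepsilon^2\|\bv{x}^*\|^2$. The delicate point --- and the main obstacle --- is making sure that absorbing $O(\varepsilon^2/\cond^2)\|\bv{x}_k\|^2$ into the $E_k$ coefficient via the triangle inequality does not degrade the contraction rate below $1 - c/\cond^2$ with $c \ge 1/2$, so that $K$ iterations truly reach the target $\varepsilon^2$ rather than some weaker $\varepsilon^{1-o(1)}$; this is precisely why $C$ scales as $\Theta(\condF^2/\varepsilon^2)$ and why the hypothesis $\varepsilon < 1/4$ appears.
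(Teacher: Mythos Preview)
Your proposal is correct and follows essentially the same route as the paper: split the stochastic gradient into the Kaczmarz-with-averaging piece $\bv{M}_k(\bv{x}_k-\bv{x}^*)$ and the column-sampling noise $\bv{z}_k-\bv{M}_k\bv{x}_k$, use conditional unbiasedness of the column sampler to kill cross terms, extract the contraction rate $1-1/(2\cond^2)$ from the RK piece via the range-of-$\bv{A}^\T$ argument, bound the noise by $O(\varepsilon^2/\cond^2)\,\E[\|\bv{x}_k\|^2]$, and close with an induction that keeps $\E[\|\bv{x}_k\|^2]$ uniformly bounded by a constant times $\|\bv{x}^*\|^2$. The only cosmetic difference is that the paper keeps the RK piece grouped as $(\bv{I}-\alpha\bv{M}_k)\bm{\delta}\bv{x}_k$ and bounds the projected operator norm $\|\bm{\Pi}\,\E[(\bv{I}-\alpha\bv{M}_k)^2]\,\bm{\Pi}\|$ directly (using the exact formula for $\E[\bv{M}_k^2]$), rather than expanding the square into linear and quadratic pieces as you do.
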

\paragraph{Querying and sampling.}
Since $\bv{y}$ is sparse, we can easily evaluate entries of $\bv{x} = \bv{A}^\T\bv{y}$.
In \cref{sec:y_bound}, we show that $\phi(\bv{y})$ is not too large, which allows efficient sampling from $\bv{x}$ by \cref{thm:sampling}.
The approach here is to show that the entries of $\bv{y}$ can be bounded in terms of $\|\bv{x}\| = \|\bv{A}^\T\bv{y}\|$.
Intuitively, we might expect this since $\bv{y}$ does not change too much from iteration to iteration, as can be seen in \eqref{eqn:yk_update}.

\begin{restatable}{theorem}{mainy}\label{thm:main_y}
Suppose $\varepsilon<1/4$ and run \cref{alg:main} with parameters as in \cref{eqn:RCk_set} to obtain $\bv{y}$.
Then, with probability at least $9/10$,
\begin{equation*}
    \phi(\bv{y}) \leq \const\cdot \condF^2 \cond^6 \log\left(\frac{1}{\varepsilon}\right)^3.
\end{equation*}
\end{restatable}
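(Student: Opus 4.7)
The starting point is to factor
\begin{equation*}
\phi(\bv{y}) = \operatorname{nnz}(\bv{y}) \cdot \frac{\sum_r |\bv{y}_r|^2\|\bv{e}_r^\T\bv{A}\|^2}{\|\bv{A}^\T\bv{y}\|^2}
\end{equation*}
and bound the three ingredients separately. \cref{thm:y_cost} already provides $\operatorname{nnz}(\bv{y}) \leq KR = O(\condF^2\log(1/\varepsilon))$. For the denominator, an application of Markov's inequality to \cref{thm:main_x} gives $\|\bv{A}^\T\bv{y}-\bv{x}^*\| \leq \sqrt{40}\,\varepsilon\|\bv{x}^*\|$ with probability at least $19/20$, so (after tightening the constant in the assumption $\varepsilon<1/4$ if needed) $\|\bv{A}^\T\bv{y}\| \geq \tfrac{1}{2}\|\bv{x}^*\|$. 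For the numerator, the crude pointwise bound $\|\bv{e}_r^\T\bv{A}\| \leq \|\bv{A}\|$ gives $\sum_r |\bv{y}_r|^2\|\bv{e}_r^\T\bv{A}\|^2 \leq \|\bv{A}\|^2\|\bv{y}\|^2$, reducing the goal to showing $\|\bv{y}\|^2 = O(\cond^4\log(1/\varepsilon)^2 \|\bv{A}^+\|^2\|\bv{x}^*\|^2)$ with high probability.

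Bounding $\|\bv{y}_K\|$ directly is awkward because $\bv{y}_K$ may carry a sizeable component in the left null space of $\bv{A}$. I would instead compare $\bv{y}_K$ to the minimum-norm dual solution $\bv{y}^* := (\bv{A}\bv{A}^\T)^+\bv{b}$; a short SVD calculation shows $\bv{A}^\T\bv{y}^* = \bv{A}^+\bv{b} = \bv{x}^*$ and $\|\bv{y}^*\| \leq \|\bv{A}^+\|\|\bv{x}^*\|$. Splitting $\|\bv{y}\|^2 \leq 2\|\bv{y}^*\|^2 + 2\|\bv{y}-\bv{y}^*\|^2$, the first term already contributes only $2\cond^2\|\bv{x}^*\|^2/\|\bv{A}\|^2$, which is within budget, so it remains to bound $\|\bv{y}_K - \bv{y}^*\|^2$.

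Observe that the $\bv{y}$-update of \cref{alg:main} is stochastic gradient descent for the dual objective $\tfrac{1}{2}\|\bv{A}\bv{A}^\T\bv{y}-\bv{b}\|^2$, whose minimum-norm minimizer is exactly $\bv{y}^*$. Setting $\bv{g}_k := \bv{y}_k - \bv{y}_{k+1}$ and using the identity $\bv{A}\bv{x}_k - \bv{b} = \bv{A}\bv{A}^\T(\bv{y}_k - \bv{y}^*)$, the standard SGD descent identity reads
\begin{equation*}
\E\bigl[\|\bv{y}_{k+1}-\bv{y}^*\|^2 \bigm| \bv{y}_k\bigr] = \|\bv{y}_k - \bv{y}^*\|^2 - 2\alpha\|\bv{x}_k - \bv{x}^*\|^2 + \E\bigl[\|\bv{g}_k\|^2 \bigm| \bv{y}_k\bigr].
\end{equation*}
Telescoping from $\bv{y}_0 = \bv{0}$ leaves $\E\|\bv{y}_K - \bv{y}^*\|^2 \leq \|\bv{y}^*\|^2 + \sum_{k<K}\E\|\bv{g}_k\|^2$, so the remaining work is a uniform-in-$k$ bound on $\E\|\bv{g}_k\|^2$.

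The main obstacle is precisely this variance estimate. Expanding the double sum over $\mathcal{R}_k\times\mathcal{R}_k$ splits $\|\bv{g}_k\|^2$ into a ``mean-square'' piece $\alpha^2\|\bv{A}(\bv{x}_k-\bv{x}^*)\|^2$ plus a row-sampling variance piece proportional to $(\alpha^2/R)\sum_r \E_{\mathcal{C}_k}[\gamma_{r,k}^2]/p_r$; decomposing $\gamma_{r,k}$ into its $\mathcal{C}_k$-mean $\bv{e}_r^\T(\bv{A}\bv{x}_k - \bv{b})$ and a zero-mean column-noise term further separates the latter into a Kaczmarz-type contribution proportional to $\|\bv{x}_k - \bv{x}^*\|^2$ and a column-sampling contribution proportional to $\|\bv{x}_k\|^2/C$. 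The choices of $R$, $C$, and $\alpha$ in \cref{eqn:RCk_set} are tuned precisely so that each piece is controlled by $\alpha\cdot\const\cdot\|\bv{x}_k-\bv{x}^*\|^2$ plus a small additive tail. Summing against the geometric decay $\E\|\bv{x}_k-\bv{x}^*\|^2 \leq (1-\Omega(\cond^{-2}))^k\|\bv{x}^*\|^2$ underlying \cref{thm:main_x} over the $K = O(\cond^2\log(1/\varepsilon))$ iterations yields the target bound on $\E\|\bv{y}_K-\bv{y}^*\|^2$, after which a second Markov and a union bound deliver probability at least $9/10$.
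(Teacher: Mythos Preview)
Your plan has a genuine gap at the step where you pass from the numerator $\sum_r |\bv{y}_r|^2\|\bv{e}_r^\T\bv{A}\|^2$ to $\|\bv{A}\|^2\|\bv{y}\|^2$ and then attempt to bound $\|\bv{y}\|^2$. The intermediate target $\|\bv{y}_K\|^2 = O(\cond^4\log(1/\varepsilon)^2\|\bv{A}^+\|^2\|\bv{x}^*\|^2)$ is in fact \emph{false} in general. To see why the variance argument cannot deliver it, note that the ``Kaczmarz-type'' piece of $\E[\|\bv{g}_k\|^2\mid\bv{y}_k]$ is
\[
\frac{\alpha^2}{R}\sum_r \frac{|\bv{e}_r^\T\bv{A}(\bv{x}_k-\bv{x}^*)|^2}{p_r}
= \frac{\alpha^2\|\bv{A}\|_\F^2}{R}\,\bigl\|\bv{D}^{-1}\bv{A}(\bv{x}_k-\bv{x}^*)\bigr\|^2,
\qquad \bv{D}:=\operatorname{diag}(\|\bv{e}_r^\T\bv{A}\|).
\]
The row-normalized matrix $\bv{D}^{-1}\bv{A}$ has unit-norm rows but its operator norm is \emph{not} controlled by $\cond$ or $\condF$; it can be as large as $\sqrt{n}$. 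Concretely, take $\bv{A}$ with two rows $\bv{e}_1^\T\pm\bv{e}_2^\T$ and $n-2$ copies of $\epsilon\,\bv{e}_3^\T$ with $(n-2)\epsilon^2=1$, and $\bv{x}^*=\bv{e}_3$. Then $\cond^2=2$, $\condF^2=5$, $\|\bv{A}^+\|=1$, yet one checks that $\|\bv{y}_K\|^2\asymp \log(1/\varepsilon)/\epsilon^2 = (n-2)\log(1/\varepsilon)$, which blows up with $n$ while your target stays bounded. (Relatedly, $\bv{y}_K$ is $KR$-sparse whereas $\bv{y}^*$ is dense, so $\|\bv{y}_K-\bv{y}^*\|$ cannot be small here.) The negative drift $-2\alpha\|\bv{x}_k-\bv{x}^*\|^2$ you dropped does not rescue this: it is $O(\alpha)\|\bv{x}_k-\bv{x}^*\|^2$, while the offending term is $\Theta(\alpha\,\|\bv{D}^{-1}\bv{A}\|^2)\|\bv{x}_k-\bv{x}^*\|^2$.

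The paper's proof avoids this precisely by \emph{not} passing to $\|\bv{y}\|^2$: it bounds the weighted quantity $\|\bv{D}\bv{y}_{k+1}\|^2$ (which is exactly the numerator of $\phi$) directly. Writing the update as $\bv{D}\bv{y}_{k+1}=\bv{D}\bv{y}_k-\alpha\bv{D}(\bv{u}_k-\bv{v}_k)$ and crudely telescoping, one needs only $\alpha^2\E\|\bv{D}(\bv{u}_k-\bv{v}_k)\|^2\leq 10\cond^4\|\bv{x}^*\|^2$ (\cref{thm:Dg_bound}). The point is that each increment to $\bv{y}$ along $\bv{e}_r$ carries a factor $1/p_r\propto 1/\|\bv{e}_r^\T\bv{A}\|^2$, and premultiplying by $\bv{D}$ restores one power of $\|\bv{e}_r^\T\bv{A}\|$; after averaging over $r\sim p_r$ the remaining factor cancels and everything is dimension-free. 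Your crude bound $\|\bv{D}\bv{y}\|\leq\|\bv{A}\|\|\bv{y}\|$ throws away exactly this cancellation.
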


\paragraph{Putting it together.}
Combing the above results gives \cref{eqn:rates}.
Specifically, \cref{thm:main_x} implies the accuracy part of the result.
\Cref{thm:y_cost} gives the sparsity guarantee.
Finally, \cref{thm:main_y}, when combined with the sparsity bound and \cref{thm:sampling}, gives the sampling cost.

\section{Analysis}

In \cref{sec:x_bound} we analyze the accuracy of the iterate $\bv{x}_{K+1} = \bv{A}^\T\bv{y}_{K+1}$. 
Then, in \cref{sec:y_bound}, we analyze how efficiently we can sample from the approximate solution.
Together, these give the rates described in \cref{sec:results}.
A number of elementary calculations are deferred to \cref{sec:deferred_proofs}.

\subsection{Analysis of accuracy}
\label{sec:x_bound}

Our first main technical task is to analyze $\|\bv{x}^* -\bv{x}_{K+1} \|$.
Define the random variables 
\begin{equation}
\label{eqn:RV_defs}
\bv{M}_k := \frac{1}{R}\sum_{r\in\mathcal{R}_k} \frac{1}{p_{r}} \bv{A}^\T \bv{e}_{r} \bv{e}_{r}^\T\bv{A}
,\qquad
\bv{z}_k := \frac{1}{R}\sum_{r\in\mathcal{R}_k} \frac{1}{p_{r}} \bv{A}^\T \bv{e}_{r} \left(\frac{1}{C}\sum_{c\in\mathcal{C}_{k}} \frac{1}{q_{c}} \bv{e}_r^\T\bv{A} \bv{e}_{c} \bv{e}_c^\T \bv{x}_k\right).
\end{equation}
Since $\bv{b}\in\range(\bv{A})$, then $\bv{b} = \bv{A}\bv{x}^*$.
Therefore, \cref{eqn:xk_update} becomes
\begin{equation}
    \bv{x}_{k+1} 
    = \bv{x}_k - \alpha (\bv{z}_k - \bv{M}_k\bv{x}^*)
    = \bv{x}_k - \alpha \bv{M}_k (\bv{x}_k - \bv{x}^*) + \alpha(\bv{M}_k\bv{x}_k - \bv{z}_k).
\end{equation}
Finally, subtracting $\bv{x}^*$ from both sides we obtain a recurrence
\begin{equation}
    \bv{x}_{k+1} - \bv{x}^* 
    = (\bv{I}-\alpha \bv{M}_k)(\bv{x}_k - \bv{x}^*) + \alpha(\bv{M}_k\bv{x}_k - \bv{z}_k). \label{eqn:error_recurrence}
\end{equation}
We will iteratively bound the error $\|\bv{x}_{k+1} - \bv{x}^*\|$.

\subsubsection{Preliminary Lemmas}

We begin by stating a few basic facts that we prove in \cref{sec:deferred_proofs:accuracy}.
\begin{restatable}{lemma}{Mkbd}
\label{thm:Mk_bd}
Let $\bv{M}_k$ be defined as in \cref{eqn:RV_defs}.
For any $R>0$, $\E[\bv{M}_k] = \bv{A}^\T\bv{A}$ and
\begin{equation*}
    \E\left[\bv{M}_k^2\right]
    = \frac{1}{R}\|\bv{A}\|_\F^2 \bv{A}^\T\bv{A} + \left(1 - \frac{1}{R} \right) (\bv{A}^\T\bv{A})^2.
\end{equation*}
\end{restatable}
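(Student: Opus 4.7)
The plan is a direct second-moment computation, exploiting the fact that the unbiased estimator $\frac{1}{p_r}\bv{A}^\T\bv{e}_r\bv{e}_r^\T\bv{A}$ has a particularly clean structure under the squared-row-norm distribution. The key algebraic identity to keep in mind throughout is that $p_i = \|\bv{e}_i^\T\bv{A}\|^2/\|\bv{A}\|_\F^2$, so that $1/p_i$ and $\|\bv{e}_i^\T\bv{A}\|^2/p_i$ both simplify nicely when paired with the probability $p_i$ in an expectation.

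First, for the mean, I would treat a single term $\bv{X}_r := p_r^{-1}\bv{A}^\T\bv{e}_r\bv{e}_r^\T\bv{A}$ for $r\sim\mathcal{D}^{\textup{row}}$. Then
\[
\E[\bv{X}_r] = \sum_{i=1}^n p_i \cdot \frac{1}{p_i}\bv{A}^\T\bv{e}_i\bv{e}_i^\T\bv{A} = \bv{A}^\T\bv{A},
\]
and $\E[\bv{M}_k] = \bv{A}^\T\bv{A}$ follows by linearity over the $R$ i.i.d.\ draws in $\mathcal{R}_k$.

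Next, I would expand
\[
\bv{M}_k^2 = \frac{1}{R^2}\sum_{r,r'\in\mathcal{R}_k} \frac{1}{p_r p_{r'}} \bv{A}^\T\bv{e}_r\bv{e}_r^\T\bv{A}\bv{A}^\T\bv{e}_{r'}\bv{e}_{r'}^\T\bv{A}
\]
and split the double sum by whether the two indices come from the \emph{same} draw (there are $R$ such pairs) or from two \emph{distinct} draws (there are $R(R-1)$ such pairs). For distinct draws the two factors are independent, so the expectation factors and each factor equals $\bv{A}^\T\bv{A}$ by the first step; these pairs contribute $\frac{R(R-1)}{R^2}(\bv{A}^\T\bv{A})^2 = (1-1/R)(\bv{A}^\T\bv{A})^2$. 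For same-draw pairs, I use
\[
\E\!\left[\frac{1}{p_r^2}\bv{A}^\T\bv{e}_r(\bv{e}_r^\T\bv{A}\bv{A}^\T\bv{e}_r)\bv{e}_r^\T\bv{A}\right] = \sum_i \frac{\|\bv{e}_i^\T\bv{A}\|^2}{p_i}\bv{A}^\T\bv{e}_i\bv{e}_i^\T\bv{A} = \|\bv{A}\|_\F^2\,\bv{A}^\T\bv{A},
\]
where the crucial cancellation $\|\bv{e}_i^\T\bv{A}\|^2/p_i = \|\bv{A}\|_\F^2$ turns the sum into $\|\bv{A}\|_\F^2\sum_i \bv{A}^\T\bv{e}_i\bv{e}_i^\T\bv{A} = \|\bv{A}\|_\F^2\,\bv{A}^\T\bv{A}$. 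These pairs therefore contribute $\frac{R}{R^2}\|\bv{A}\|_\F^2\bv{A}^\T\bv{A} = \frac{1}{R}\|\bv{A}\|_\F^2\bv{A}^\T\bv{A}$. Adding the two contributions yields the claimed identity.

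There is no real obstacle here: the entire argument is linearity of expectation plus the two identities $\sum_i p_i\cdot p_i^{-1}(\cdot) = \sum_i(\cdot)$ and $\|\bv{e}_i^\T\bv{A}\|^2/p_i = \|\bv{A}\|_\F^2$. The only minor care needed is the bookkeeping distinction between same-draw pairs and independent-draw pairs when squaring the sum; this is what produces the $1/R$ versus $1-1/R$ split and is the sole place the batch size $R$ enters the formula.
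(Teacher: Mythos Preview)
Your proof is correct and follows essentially the same approach as the paper: compute $\E[\bv{X}_r]=\bv{A}^\T\bv{A}$ and $\E[\bv{X}_r^2]=\|\bv{A}\|_\F^2\bv{A}^\T\bv{A}$ for a single draw, then combine via the diagonal/off-diagonal split of the squared i.i.d.\ sum. The only cosmetic difference is that the paper packages the same-draw versus distinct-draw bookkeeping into a general lemma about i.i.d.\ sums (\cref{thm:expected_norm_sum}) and invokes it, whereas you carry out that split inline.
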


\begin{restatable}{lemma}{colboundsingle}
\label{thm:col_bound_single}
Let $\bv{M}_k$ and $\bv{z}_k$ be defined as in \cref{eqn:RV_defs}.
    For any $R,C>0$, $\E[\bv{M}_k \bv{x}_k - \bv{z}_k | \bv{M}_k,\bv{x}_k] = \bv{0}$ and
    \begin{equation*}
    \E\left[ \| \bv{M}_k \bv{x}_k - \bv{z}_k \|^2 \middle| \bv{x}_k \right] 
    \leq \left( \frac{\|\bv{A}\|_\F^4}{RC} + \frac{\|\bv{A}\|_\F^2\|\bv{A}\|^2 }{C} \right)\|\bv{x}_k \|^2.
    \end{equation*}
\end{restatable}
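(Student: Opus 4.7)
The plan is to exploit \cref{alg:main}'s reuse of a single column-sample batch $\mathcal{C}_k$ across every row, which lets the inner column average factor cleanly out of $\bv{z}_k$. Writing $\bv{v}_c := q_c^{-1}\bv{e}_c\bv{e}_c^\T\bv{x}_k$, inspection of \cref{eqn:RV_defs} gives $\bv{z}_k = \bv{M}_k \cdot \frac{1}{C}\sum_{c\in\mathcal{C}_k}\bv{v}_c$, and hence $\bv{M}_k\bv{x}_k - \bv{z}_k = -\bv{M}_k\bv{w}$ with $\bv{w} := \frac{1}{C}\sum_{c\in\mathcal{C}_k}\bv{v}_c - \bv{x}_k$. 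Since the $q_c$ factors cancel in expectation, $\E[\bv{v}_c \mid \bv{x}_k] = \sum_c \bv{e}_c\bv{e}_c^\T \bv{x}_k = \bv{x}_k$, so $\E[\bv{w} \mid \bv{x}_k] = \bv{0}$. Conditioning additionally on $\bv{M}_k$ changes nothing because $\mathcal{C}_k$ is drawn independently of $\mathcal{R}_k$; this gives the unbiasedness claim.

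For the second moment, I use the same independence: $\bv{M}_k$ is a function of $\mathcal{R}_k$ while $\bv{w}$ is a function of $(\mathcal{C}_k, \bv{x}_k)$, so they are conditionally independent given $\bv{x}_k$. Averaging over $\mathcal{R}_k$ first,
\begin{equation*}
\E\bigl[\|\bv{M}_k\bv{w}\|^2 \bigm| \bv{x}_k\bigr]
= \E\bigl[\bv{w}^\T \E[\bv{M}_k^2]\, \bv{w} \bigm| \bv{x}_k\bigr]
= \operatorname{tr}\bigl(\E[\bv{M}_k^2]\cdot\operatorname{Cov}(\bv{w} \mid \bv{x}_k)\bigr),
\end{equation*}
where the last equality uses $\E[\bv{w}\mid\bv{x}_k] = \bv{0}$. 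Because the $C$ samples in $\mathcal{C}_k$ are i.i.d., $\operatorname{Cov}(\bv{w} \mid \bv{x}_k) = C^{-1}\operatorname{Cov}(\bv{v}_c \mid \bv{x}_k) \preceq C^{-1}\E[\bv{v}_c\bv{v}_c^\T \mid \bv{x}_k]$, and unwinding the definition yields the diagonal matrix $\E[\bv{v}_c\bv{v}_c^\T \mid \bv{x}_k] = \sum_{c=1}^{d} q_c^{-1}(\bv{e}_c^\T\bv{x}_k)^2\, \bv{e}_c\bv{e}_c^\T$.

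To finish, I substitute \cref{thm:Mk_bd}'s formula $\E[\bv{M}_k^2] = R^{-1}\|\bv{A}\|_\F^2\bv{A}^\T\bv{A} + (1 - R^{-1})(\bv{A}^\T\bv{A})^2$ and compute the two resulting traces against the diagonal matrix above. Using $q_c = \|\bv{A}\bv{e}_c\|^2/\|\bv{A}\|_\F^2$, the $\bv{A}^\T\bv{A}$-piece telescopes to $\|\bv{A}\|_\F^2 \|\bv{x}_k\|^2$; the $(\bv{A}^\T\bv{A})^2$-piece is handled by $\|\bv{A}^\T\bv{A}\bv{e}_c\|^2 \leq \|\bv{A}\|^2\|\bv{A}\bv{e}_c\|^2$, yielding $\|\bv{A}\|^2\|\bv{A}\|_\F^2\|\bv{x}_k\|^2$. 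Dropping $(1 - R^{-1}) \leq 1$ and dividing by $C$ gives exactly the stated bound. I do not anticipate a real obstacle — the one thing to be careful about is conditioning bookkeeping, and the essential trick is the factoring $\bv{z}_k = \bv{M}_k \cdot (\text{column-estimator of }\bv{x}_k)$ made possible by the shared column batch (cf.~\cref{rem:col_sampling}); without it, expanding $\bv{z}_k$ index-by-index would produce a correlated double sum over $(\bv{e}_r^\T\bv{A}\bv{e}_c)$ terms that is much messier to control.
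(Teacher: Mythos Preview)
Your proof is correct and follows essentially the same approach as the paper: both hinge on the factorization $\bv{z}_k = \bv{M}_k\cdot\frac{1}{C}\sum_{c\in\mathcal{C}_k} q_c^{-1}\bv{e}_c\bv{e}_c^\T\bv{x}_k$ afforded by the shared column batch, then bound the resulting second moment by separately controlling the row and column randomness. The only cosmetic difference is the order of averaging---the paper conditions on $\bv{M}_k$ first and invokes the approximate matrix-multiplication lemma (\cref{thm:approxMM}) twice, whereas you average over rows first via the closed form for $\E[\bv{M}_k^2]$ from \cref{thm:Mk_bd} and then take the trace against the column covariance---but the underlying computation is identical.
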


\subsubsection{Main bound}

The proof of our main bound is now straightforward, and follows a standard analysis approach for stochastic gradient methods.

\mainx*

\begin{proof}
We will prove
\begin{equation}
\E\left[ \|\bv{x}_{K+1} - \bv{x}^*\|^2 \right]
\leq \left( \rho^{K+1} + \varepsilon^2 \right) \| \bv{x}^* \|^2
,\qquad
\rho := 1 - \frac{1}{2\cond^2}.
\end{equation}
Since $K\geq 4 \kappa^2 \log(1/\varepsilon)$, it holds that $\rho^{K+1} \leq \varepsilon^2$, which then gives the result.

Define $\bm{\delta}\bv{x}_j = \bv{x}_j - \bv{x}^*$.
With this notation, \cref{eqn:error_recurrence} becomes
\begin{equation}
    \bm{\delta}\bv{x}_{k+1} = (\bv{I} - \alpha \bv{M}_k)\bm{\delta}\bv{x}_k + \alpha(\bv{M}_k \bv{x}_k - \bv{z}_k).
\end{equation}
By the law of iterated expectation and \cref{thm:col_bound_single},
\begin{align}
\E\left[(\bv{M}_k \bv{x}_k - \bv{z}_k)^\T(\bv{I} - \alpha \bv{M}_k)\bm{\delta}\bv{x}_k\right]
&= \E\left[\E\left[(\bv{M}_k \bv{x}_k - \bv{z}_k)^\T(\bv{I} - \alpha \bv{M}_k)\bm{\delta}\bv{x}_k\middle|\bv{M}_k,\bv{x}_k\right]\right]
\\&= \E\left[\E\left[\bv{M}_k \bv{x}_k - \bv{z}_k\middle|\bv{M}_k,\bv{x}_k\right]^\T(\bv{I} - \alpha \bv{M}_k)\bm{\delta}\bv{x}_k\right] = \bv{0}.
\end{align}
Therefore, expanding $\|\bm{\delta}\bv{x}_{k+1}\|^2$,
\begin{align}
    \E\left[\| \bm{\delta}\bv{x}_{k+1}\|^2\right] 
    &= \E\left[\| (\bv{I} - \alpha \bv{M}_k)\bm{\delta}\bv{x}_k\|^2 + 2\alpha(\bv{M}_k \bv{x}_k - \bv{z}_k)^\T(\bv{I} - \alpha \bv{M}_k)\bm{\delta}\bv{x}_k + \alpha^2\|\bv{M}_k \bv{x}_k - \bv{z}_k \|^2\right]
    \\&=\underbrace{\E\left[\| (\bv{I} - \alpha \bv{M}_k)\bm{\delta}\bv{x}_k\|^2\right]}_{\text{term 1}} + \underbrace{\alpha^2\E\left[\|\bv{M}_k \bv{x}_k - \bv{z}_k \|^2\right]}_{\text{term 2}}.\label{eqn:ek1_bound}
\end{align}
We will now bound these two terms.

\paragraph{Bounding term 1.}
We begin with the first term in \cref{eqn:ek1_bound}.
Let $\bm{\Pi}$ be the orthogonal projector onto $\range(\bv{A}^\T)$.
Observe that $\bv{x}_k  = \bv{A}^\T\bv{y}_k \in \range(\bv{A}^\T)$ and $\bv{x}^* = \bv{A}^+ \bv{b} \in \range(\bv{A}^\T)$.
Therefore $\bm{\delta}\bv{x}_k\in\range(\bv{A}^\T)$ so $\bm{\delta}\bv{x}_k = \bm{\Pi}\bm{\delta}\bv{x}_k$.

Now,
\begin{align}
\E\left[\| (\bv{I} - \alpha \bv{M}_k)\bm{\delta}\bv{x}_k\|^2 \middle| \bv{x}_k \right]
&= \bm{\delta}\bv{x}_k^\T \E\left[(\bv{I} - \alpha \bv{M}_k)^2\right]\bm{\delta}\bv{x}_k
\\&= \bm{\delta}\bv{x}_k^\T \bm{\Pi} \E\left[(\bv{I} - \alpha \bv{M}_k)^2\right]\bm{\Pi}\bm{\delta}\bv{x}_k
\\&\leq \left\|\bm{\Pi} \E \left[(\bv{I} - \alpha \bv{M}_k)^2 \right] \bm{\Pi}\right\|\|\bm{\delta}\bv{x}_k\|^2.
\end{align}
Using \cref{thm:Mk_bd} we find that
\begin{align}
\E\left[ (\bv{I} - \alpha \bv{M}_k)^2\right]
&= \E\left[\bv{I} - 2\alpha \bv{M}_k + \alpha^2 \bv{M}_k^2\right]
\\&= \bv{I}- 2\alpha \bv{A}^\T\bv{A} + \alpha^2 (R^{-1}\|\bv{A}\|_\F^2\bv{A}^\T\bv{A} + (1-R^{-1})(\bv{A}^\T\bv{A})^2)
\\&\preceq\bv{I} - 2\alpha \bv{A}^\T\bv{A}  + \alpha^2 (\bv{A}^\T\bv{A})^2+ \alpha^2 R^{-1}\|\bv{A}\|_\F^2\bv{A}^\T\bv{A}. 
\\&= (\bv{I} - \alpha \bv{A}^\T\bv{A})^2 + \alpha^2 R^{-1}\|\bv{A}\|_\F^2\bv{A}^\T\bv{A} 
\\&\preceq \bv{I} - \alpha \bv{A}^\T\bv{A} + \alpha^2 R^{-1}\|\bv{A}\|_\F^2\bv{A}^\T\bv{A},
\end{align}
where we have used that $-\alpha^2 R^{-1}(\bv{A}^\T\bv{A})^2 \preceq \bv{0}$ and $(\bv{I} - \alpha \bv{A}^\T\bv{A})^2 \preceq \bv{I} - \alpha \bv{A}^\T\bv{A}$ (all nonzero eigenvalues of $\bv{A}^\T\bv{A}$ are less than $1/\alpha$ in magnitude).

Therefore, if $R \geq 2\alpha\|\bv{A}\|_\F^2 = 2\condF^2/\cond^2$, then
\begin{equation}
\label{eqn:ek1_bound_term1}
\left\| \bm{\Pi}\E \left[(\bv{I} - \alpha \bv{M}_k)^2\right]\bm{\Pi} \right\|
\leq \| \bm{\Pi}(\bv{I} - (\alpha/2) \bv{A}^\T\bv{A})\bm{\Pi}\|
= \max_{\substack{\lambda \in\operatorname{spec}(\bv{A}^\T\bv{A})\\\lambda\neq0}} | 1 - (\alpha/2) \lambda |
= \rho,
\end{equation}
where the final equality follows because $\alpha /2 > \lambda$ for each $\lambda \in\operatorname{spec}(\bv{A}^\T\bv{A})$.

\paragraph{Bounding term 2.}
We now bound the second term in \cref{eqn:ek1_bound}.
By \cref{thm:col_bound_single} and the assumption $\alpha = 1/\|\bv{A}\|^2$, if $C \geq \alpha \|\bv{A}\|_\F^2/\gamma^2 = \alpha^2 \|\bv{A}\|_\F^2\|\bv{A}\|^2/\gamma^2 \geq \max\big\{ \alpha^2 \|\bv{A}\|_\F^4/(R\gamma^2) ,\alpha^2 \|\bv{A}\|_\F^2\|\bv{A}\|^2 / \gamma^2 \big\}$, then
\begin{equation}\label{eqn:ek1_bound_term2}
    \alpha^2 \E\left[ \| \bv{M}_j \bv{x}_j - \bv{z}_j \|^2 \big| \bv{x}_j \right] 
    \leq \alpha^2 \left( \frac{\|\bv{A}\|_\F^4}{RC} + \frac{\|\bv{A}\|_\F^2\|\bv{A}\|^2 }{C} \right)\|\bv{x}_j \|^2
    \leq 2 \gamma^2 \| \bv{x}_j\|^2.
\end{equation}
\paragraph{Finishing up.}
Plugging \cref{eqn:ek1_bound_term1,eqn:ek1_bound_term2} into \cref{eqn:ek1_bound} and using the law of iterated expectations we arrive at an expression
\begin{equation}\label{eqn:ek1_xj}
\E\left[\| \bm{\delta}\bv{x}_{k+1}\|^2\right] 
\leq  \rho \E\left[\| \bm{\delta}\bv{x}_{k}\|^2\right] + 2 \gamma^2 \E\left[ \|\bv{x}_j \|^2 \right].
\end{equation}

Our aim is now to relate the quantities in \cref{eqn:ek1_xj} to $\|\bv{x}^*\|$.
Clearly $\E\|\bv{x}_0\|^2 = 0 \leq \|\bv{x}^*\|^2$.
Now, assume, for the sake of induction, that for each $j\leq k$,
\begin{equation}
\E\left[\| \bv{x}_{j}  \|^2\right]
\leq  2 \| \bv{x}^*\|^2.
\end{equation}
Then, \cref{eqn:ek1_xj} yields a bound
\begin{equation}
\E\left[\| \bm{\delta}\bv{x}_{k+1}\|^2\right] 
\leq  \rho \E\left[\| \bm{\delta}\bv{x}_{k}\|^2\right] + 4 \gamma^2 \|\bv{x}^* \|^2 .
\end{equation}
Unrolling this, and using that $\|\bm{\delta}\bv{x}_0\| = \|\bv{x}_0 - \bv{x}^*\| = \|\bv{x}^*\|$, we find that 
\begin{align}
\E\left[\| \bm{\delta}\bv{x}_{k+1}\|^2\right] 
&\leq  \rho^{k+1} \| \bv{x}^*\|^2 + 4 \gamma^2 \|\bv{x}^* \|^2 \sum_{j=0}^{k} \rho^{j}
\\&\leq  \rho^{k+1} \| \bv{x}^*\|^2 + 4 \gamma^2 \|\bv{x}^* \|^2 \sum_{j=0}^{\infty} \rho^{j}
\\&= \rho^{k+1} \| \bv{x}^*\|^2 + 4 (1-\rho)^{-1} \gamma^2 \|\bv{x}^* \|^2.
\end{align}
Since $(1-\rho)^{-1} = 2\cond^2$, setting $\varepsilon^2 = 8 \cond^2\gamma^2$ (so that $C \geq (\condF^2/\cond^2) / (\varepsilon^2 / 10\cond^2) = 10 \condF^2/\varepsilon^2$) gives that 
\begin{equation}
\E\left[ \|\bv{x}_{k+1} - \bv{x}^*\|^2 \right]
\leq \left(\rho^{k+1} + \varepsilon^2 \right) \| \bv{x}^* \|^2
\leq 2 \varepsilon^2 \|\bv{x}^*\|^2.
\label{eqn:xk_intermediate}
\end{equation}
Thus, by the triangle inequality, Jensen's inequality, and \cref{eqn:xk_intermediate}
\begin{align}
\E\left[\|\bv{x}_k\|^2\right]  
&\leq 
\E\left[(\| \bv{x}^* - \bv{x}_k \|+\|\bv{x}^*\|)^2\right] 
\\&\leq \E\left[\| \bv{x}^* - \bv{x}_k \|^2\right] + 2  \|\bv{x}^*\| \sqrt{\E\left[\| \bv{x}^* - \bv{x}_k \|^2\right]} + \|\bv{x}^*\|^2 
\\&\leq 2 \varepsilon^2 \|\bv{x}^*\|^2 + 2 \sqrt{2} \varepsilon \|\bv{x}^*\|^2 + \|\bv{x}^*\|^2
< 2\|\bv{x}^*\|^2,
\label{eqn:xk_bound}
\end{align}
where we have used that $\varepsilon < 1/4$.
The inductive hypothesis is satisfied and the result follows.
\end{proof}

\subsection{Analysis of sampling}
\label{sec:y_bound}

To apply \cref{thm:sampling}, we must bound
\begin{equation}
    \phi(\bv{y}_{k+1}) = \sum_{i}  \frac{|(\bv{y}_{k+1})_i|^2\| \bv{a}_i \|^2}{\|\bv{A}^\T\bv{y}_{k+1}\|^2}.
\end{equation}
Towards this end, observe that 
\begin{equation}\label{eqn:D_def}
    \sum_i \|\bv{a}_i \|^2 |(\bv{y}_{k+1})_i|^2
= \|\bv{D} \bv{y}_{k+1}\|^2
,\quad \bv{D} := \operatorname{diag}(\|\bv{e}_1^\T\bv{A}\|, ... ,\|\bv{e}_n^\T\bv{A}\|).
\end{equation}
Define random variables
\begin{equation}\label{eqn:uv_def}
    \bv{u}_{k} := \frac{1}{R} \sum_{r\in\mathcal{R}_k} \frac{1}{p_r}\bv{e}_r \left(\frac{1}{C}\sum_{c\in\mathcal{C}_{k}} \frac{1}{q_{c}} (\bv{e}_r^\T\bv{A}\bv{e}_{c})( \bv{e}_c^\T \bv{x}_k) \right) 
    ,\qquad
    \bv{v}_k := \frac{1}{R} \sum_{r\in\mathcal{R}_k} \frac{1}{p_r}\bv{e}_r \bv{e}_r^\T\bv{b}.
\end{equation}
The recurrence \cref{eqn:yk_update} then gives
\begin{equation}\label{eqn:Dyk1_def}
    \bv{D}\bv{y}_{k+1} = \bv{D}\bv{y}_k - \alpha \bv{D}(\bv{u}_k-\bv{v}_k).
\end{equation}
Our general approach will be to show that $\alpha\| \bv{D}(\bv{u}_k-\bv{v}_k) \|$ is not to large, so that $\|\bv{D}\bv{y}_{k+1}\| \approx \|\bv{D}\bv{y}_{k}\|$.

\subsubsection{Preliminary Lemmas}

We begin by stating a few basic facts that we prove in \cref{sec:deferred_proofs:sampling}.

\begin{restatable}{lemma}{Dgtermabound}
\label{thm:Dg_term1_bound}
Let $\bv{D}$ be defined as in \cref{eqn:D_def} and $\bv{u}_k$ be defined as in \cref{eqn:uv_def}.
For any $R,C>0$, 
\begin{equation*}
    \E\left[ \left\| \bv{D}\bv{u}_k \right\|^2 \middle| \bv{x}_k \right]
    \leq \left(\frac{\|\bv{A}\|_\F^4}{RC} + \frac{\|\bv{A}\|_\F^2\|\bv{A}\|^2}{R}  +  \| \bv{A}\|^4\right) \|\bv{x}_k \|^2.
\end{equation*}
\end{restatable}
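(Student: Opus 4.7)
The plan is to exploit the two-stage independence in the definition of $\bv{u}_k$: given the columns $\mathcal{C}_k$, the summands indexed by $\mathcal{R}_k$ are i.i.d. Setting $\sigma_r := \frac{1}{C}\sum_{c\in\mathcal{C}_k} q_c^{-1} A_{r,c}(\bv{x}_k)_c$, I can write $\bv{D}\bv{u}_k = \frac{1}{R}\sum_{i=1}^R \bv{V}_i$, where $\bv{V}_i := p_{r_i}^{-1}\|\bv{e}_{r_i}^\T\bv{A}\|\, \bv{e}_{r_i}\sigma_{r_i}$ are i.i.d.\ conditional on $\mathcal{C}_k$. First I would apply the standard variance-of-average identity
\begin{equation*}
    \E\left[\|\bv{D}\bv{u}_k\|^2 \middle| \mathcal{C}_k\right] = \frac{1}{R}\E\left[\|\bv{V}_1\|^2 \middle| \mathcal{C}_k\right] + \frac{R-1}{R}\left\|\E\left[\bv{V}_1 \middle| \mathcal{C}_k\right]\right\|^2
\end{equation*}
and evaluate the two pieces using $p_r = \|\bv{e}_r^\T\bv{A}\|^2/\|\bv{A}\|_\F^2$: a direct computation gives $\E[\|\bv{V}_1\|^2|\mathcal{C}_k] = \|\bv{A}\|_\F^2\sum_r\sigma_r^2$ and $\|\E[\bv{V}_1|\mathcal{C}_k]\|^2 = \sum_r\|\bv{e}_r^\T\bv{A}\|^2\sigma_r^2$.

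Next I would take the remaining expectation over $\mathcal{C}_k$. For each fixed $r$, $\sigma_r$ is itself the empirical average of $C$ i.i.d.\ scalars with common mean $(\bv{A}\bv{x}_k)_r$, so the variance-of-average identity applied at the column level gives $\E[\sigma_r^2] \leq \frac{1}{C}\E[Y^2] + ((\bv{A}\bv{x}_k)_r)^2$ with $Y = A_{r,c}(\bv{x}_k)_c/q_c$ and $c\sim\mathcal{D}^{\textup{col}}$. Summing over $r$, interchanging $\sum_r$ and $\sum_c$, and using $\sum_r A_{r,c}^2=\|\bv{A}\bv{e}_c\|^2$ to cancel against $q_c$ yields $\sum_r\E[Y^2]=\|\bv{A}\|_\F^2\|\bv{x}_k\|^2$. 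Together with $\sum_r((\bv{A}\bv{x}_k)_r)^2 = \|\bv{A}\bv{x}_k\|^2$, this gives $\E\sum_r\sigma_r^2\leq \|\bv{A}\|_\F^2\|\bv{x}_k\|^2/C+\|\bv{A}\bv{x}_k\|^2$. The weighted analog, obtained by inserting $\|\bv{e}_r^\T\bv{A}\|^2\leq\|\bv{A}\|^2$ in both contributions, produces $\E\sum_r\|\bv{e}_r^\T\bv{A}\|^2\sigma_r^2 \leq \|\bv{A}\|^2\|\bv{A}\|_\F^2\|\bv{x}_k\|^2/C+\|\bv{A}\|^2\|\bv{A}\bv{x}_k\|^2$.

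Combining the two pieces, using $\|\bv{A}\bv{x}_k\|^2 \leq \|\bv{A}\|^2\|\bv{x}_k\|^2$ and $(R-1)/R \leq 1$, produces a bound that factors naturally as $(\|\bv{A}\|_\F^2/R + \|\bv{A}\|^2)(\|\bv{A}\|_\F^2/C + \|\bv{A}\|^2)\|\bv{x}_k\|^2$. Three of the four expansion terms recover the stated bound directly; the fourth cross term $\|\bv{A}\|^2\|\bv{A}\|_\F^2/C$ can be absorbed into $\|\bv{A}\|_\F^2\|\bv{A}\|^2/R$ whenever $R\leq C$, which holds under the parameter choices in \cref{eqn:RCk_set}. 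The only real obstacle is bookkeeping: nesting two variance-of-average expansions requires ensuring that the $1/R$ scaling multiplies the $\|\bv{A}\|_\F^2\sum_r\sigma_r^2$ piece rather than the $\sum_r\|\bv{e}_r^\T\bv{A}\|^2\sigma_r^2$ piece, since only this alignment produces the leading $\|\bv{A}\|_\F^4/(RC)$ factor with the correct dependence on both batch sizes.
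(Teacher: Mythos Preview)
Your two-level conditioning (first fix $\mathcal{C}_k$, then average over the i.i.d.\ row summands) is the same variance-of-average strategy the paper uses, but carried out more carefully. The paper applies \cref{thm:expected_norm_sum} directly to the summands $\hat{\bv{u}}_r := p_r^{-1}\bv{D}\bv{e}_r\sigma_r$ over $r\in\mathcal{R}_k$, treating them as i.i.d.; however, since all rows share the \emph{same} column set $\mathcal{C}_k$, these summands are not independent. The correct off-diagonal term in the expansion is $\E_{\mathcal{C}_k}\bigl[\|\E_r[\hat{\bv{u}}_r\mid\mathcal{C}_k]\|^2\bigr]$ rather than $\|\E[\hat{\bv{u}}_r]\|^2$, and this discrepancy is precisely your extra cross term $\|\bv{A}\|_\F^2\|\bv{A}\|^2/C$.

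That term is genuine: the inequality as stated for \emph{arbitrary} $R,C>0$ does not hold. With $\bv{A}=\bv{I}_n$, $\bv{x}_k=\bv{e}_1$, $C=1$ and $R\to\infty$, a direct calculation gives $\E[\|\bv{D}\bv{u}_k\|^2\mid\bv{x}_k]\to n$, while the claimed bound tends to $\|\bv{A}\|^4\|\bv{x}_k\|^2=1$. Your fix---absorbing the $1/C$ term into the $1/R$ term when $R\le C$---is the right one, and the only downstream use of the lemma is \cref{thm:Dg_bound}, where the parameters of \cref{eqn:RCk_set} indeed satisfy $R\le C$, so nothing beyond an immaterial constant is affected. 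A cleaner resolution is simply to state the symmetric factored bound $\bigl(\|\bv{A}\|_\F^2/R+\|\bv{A}\|^2\bigr)\bigl(\|\bv{A}\|_\F^2/C+\|\bv{A}\|^2\bigr)\|\bv{x}_k\|^2$ that you derived, which is valid for all $R,C>0$.
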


\begin{restatable}{lemma}{Dgtermbbound}
\label{thm:Dg_term2_bound}
Let $\bv{D}$ be defined as in \cref{eqn:D_def} and $\bv{v}_k$ be defined as in \cref{eqn:uv_def}.
For any $R>0$,
\begin{equation*}
    \E\left[ \left\| \bv{D}\bv{v}_k \right\|^2 \middle| \bv{x}_k \right]
    \leq \frac{\|\bv{A}\|_\F^2 \|\bv{b}\|^2}{R} +  \| \bv{A}\|^2 \|\bv{b} \|^2.
\end{equation*}
\end{restatable}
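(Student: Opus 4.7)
The plan is to exploit the fact that $\bv{v}_k$ involves only the row sampling $\mathcal{R}_k$ and not the columns, so $\bv{v}_k$ (and hence $\bv{D}\bv{v}_k$) is independent of $\bv{x}_k$; the conditioning is therefore cosmetic. Writing $\mathcal{R}_k = \{r_1,\ldots,r_R\}$ (iid draws from $\mathcal{D}^{\textup{row}}$) and noting that $\bv{D}$ is diagonal, we have
\begin{equation*}
\bv{D}\bv{v}_k
= \frac{1}{R}\sum_{i=1}^{R} \bv{w}_i,
\qquad
\bv{w}_i := \frac{\|\bv{e}_{r_i}^\T\bv{A}\|}{p_{r_i}}\, b_{r_i}\, \bv{e}_{r_i},
\end{equation*}
which is an average of $R$ iid random vectors.

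I would then expand $\E[\|\bv{D}\bv{v}_k\|^2]$ using iid-ness into a ``diagonal plus off-diagonal'' sum, obtaining the standard identity $\E[\|\bv{D}\bv{v}_k\|^2] = \tfrac{1}{R}\E[\|\bv{w}_1\|^2] + \tfrac{R-1}{R}\|\E[\bv{w}_1]\|^2$. Each term is a one-line computation using $p_r = \|\bv{e}_r^\T\bv{A}\|^2/\|\bv{A}\|_\F^2$: the variance-like term gives
\begin{equation*}
\E[\|\bv{w}_1\|^2] = \sum_r p_r \cdot \frac{\|\bv{e}_r^\T\bv{A}\|^2}{p_r^2}\, b_r^2 = \|\bv{A}\|_\F^2 \sum_r b_r^2 = \|\bv{A}\|_\F^2\|\bv{b}\|^2,
\end{equation*}
while the mean term simplifies to $\E[\bv{w}_1] = \sum_r \|\bv{e}_r^\T\bv{A}\|\, b_r\, \bv{e}_r = \bv{D}\bv{b}$.

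The only (very mild) step requiring a bound rather than an identity is estimating $\|\E[\bv{w}_1]\|^2 = \sum_r \|\bv{e}_r^\T\bv{A}\|^2 b_r^2 \leq \|\bv{A}\|^2 \|\bv{b}\|^2$, which follows from $\|\bv{e}_r^\T\bv{A}\| \leq \|\bv{A}\|$ for every row $r$. Combining and using $(R-1)/R < 1$ yields the stated bound. I do not anticipate a genuine obstacle: the lemma is a direct moment calculation for an average of iid rank-one random vectors, and is in fact simpler than \cref{thm:Dg_term1_bound} because there is no coupling between $\bv{v}_k$ and the column samples or iterate.
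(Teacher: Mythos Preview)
Your proposal is correct and follows essentially the same approach as the paper: the paper defines $\hat{\bv{v}}_r := p_r^{-1}\bv{D}\bv{e}_r\bv{b}_r$ (your $\bv{w}_i$), computes $\E[\hat{\bv{v}}_r]=\bv{D}\bv{b}$ and $\E[\|\hat{\bv{v}}_r\|^2]=\|\bv{A}\|_\F^2\|\bv{b}\|^2$, bounds $\|\bv{D}\bv{b}\|^2\le\|\bv{A}\|^2\|\bv{b}\|^2$ via $\|\bv{D}\|\le\|\bv{A}\|$, and then applies the iid-average identity (their \cref{thm:expected_norm_sum}) exactly as you do.
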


Together, these give a bound on $\|\bv{D}(\bv{u}_k - \bv{v}_k) \|$.
\begin{lemma}
\label{thm:Dg_bound}
Let $\bv{D}$ be defined as in \cref{eqn:D_def} and $\bv{u}_k$ and $\bv{v}_k$ be defined as in \cref{eqn:uv_def}.
Suppose $\alpha$, $R$, and $C$ are as in \cref{thm:main_x}. 
Then
    \begin{equation*}
    \alpha^2\E\left[\|\bv{D} (\bv{u}_k - \bv{v}_k) \|^2 \right]
    \leq 10 \cond^4 \|\bv{x}^*\|^2.
    \end{equation*}
\end{lemma}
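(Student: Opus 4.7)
}
The plan is to split $\|\bv{D}(\bv{u}_k - \bv{v}_k)\|^2$ via the elementary inequality $\|\bv{a}-\bv{b}\|^2 \leq 2\|\bv{a}\|^2 + 2\|\bv{b}\|^2$ so that the $\bv{u}_k$ and $\bv{v}_k$ terms can be controlled separately using \cref{thm:Dg_term1_bound,thm:Dg_term2_bound}, and then to collapse every bound into a multiple of $\|\bv{x}^*\|^2$ by substituting $\alpha = 1/\|\bv{A}\|^2$ together with the prescribed lower bounds on $R$ and $C$.

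For the $\bv{u}_k$ term, \cref{thm:Dg_term1_bound} combined with $\alpha^2 = 1/\|\bv{A}\|^4$ gives
\begin{equation*}
\alpha^2\,\E\left[\|\bv{D}\bv{u}_k\|^2 \,\middle|\, \bv{x}_k\right] \leq \left(\frac{\|\bv{A}\|_\F^4}{RC\,\|\bv{A}\|^4} + \frac{\|\bv{A}\|_\F^2}{R\,\|\bv{A}\|^2} + 1\right)\|\bv{x}_k\|^2.
\end{equation*}
The choice $R \geq 2\condF^2/\cond^2 = 2\|\bv{A}\|_\F^2/\|\bv{A}\|^2$ makes the middle bracket term at most $1/2$, and combining this with $C \geq 10\condF^2/\varepsilon^2$ makes the first term negligible (at most $\varepsilon^2/(20\cond^2)$), so the bracketed factor is at most a small absolute constant. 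Taking total expectation and invoking the uniform second-moment bound $\E[\|\bv{x}_k\|^2] \leq 2\|\bv{x}^*\|^2$ already established along the way to \cref{thm:main_x} (see \cref{eqn:xk_bound}) gives an $O(1)\cdot\|\bv{x}^*\|^2$ bound on $\alpha^2\E[\|\bv{D}\bv{u}_k\|^2]$.

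For the $\bv{v}_k$ term, the same substitution in \cref{thm:Dg_term2_bound} yields
\begin{equation*}
\alpha^2\,\E\left[\|\bv{D}\bv{v}_k\|^2\right] \leq \left(\frac{\|\bv{A}\|_\F^2}{R\,\|\bv{A}\|^2} + 1\right)\frac{\|\bv{b}\|^2}{\|\bv{A}\|^2}.
\end{equation*}
The same $R$-bound controls the bracket by $3/2$, and since $\bv{b} = \bv{A}\bv{x}^*$ we have $\|\bv{b}\|^2/\|\bv{A}\|^2 \leq \|\bv{x}^*\|^2$; so this piece is also $O(1)\cdot\|\bv{x}^*\|^2$. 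Adding both contributions and using $\cond \geq 1$ lets us absorb all constants into the stated $10\cond^4\|\bv{x}^*\|^2$ (which is rather generous, but matches the form in which the bound will be used in \cref{thm:main_y}).

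The only subtlety, and the step worth flagging, is the use of $\E[\|\bv{x}_k\|^2] \leq 2\|\bv{x}^*\|^2$: here we must reach back into the inductive argument inside the proof of \cref{thm:main_x} rather than using only the error bound stated in its conclusion, since bounding $\|\bv{x}_k\|$ through a triangle inequality with $\|\bv{x}^* - \bv{x}_k\|$ depends on the accuracy guarantees at earlier iterations. Everything else is direct substitution of the parameter choices \cref{eqn:RCk_set} and simplification.
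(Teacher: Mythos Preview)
Your proposal is correct and follows essentially the same approach as the paper: split via $\|\bv{a}-\bv{b}\|^2 \le 2\|\bv{a}\|^2 + 2\|\bv{b}\|^2$, apply \cref{thm:Dg_term1_bound,thm:Dg_term2_bound}, use $\E[\|\bv{x}_k\|^2]\le 2\|\bv{x}^*\|^2$ from \cref{eqn:xk_bound} and $\|\bv{b}\|\le\|\bv{A}\|\|\bv{x}^*\|$, then substitute the parameter choices. Your explicit flagging of the reliance on \cref{eqn:xk_bound} (established inside the proof of \cref{thm:main_x}, not in its statement) is exactly right and mirrors what the paper does.
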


\begin{proof}
Recall $\bv{x}_j = \bv{A}^\T\bv{y}_j$ and note that $\bv{D}\bv{e}_r = \|\bv{e}_r^\T\bv{A}\|\bv{e}_r$.
Then using that $(x+y)^2 \leq 2(x^2+y^2)$,
\begin{equation}
    \E\left[\|\bv{D} (\bv{u}_j-\bv{v}_k)\|^2\middle| \bv{x}_k \right]
    \leq 2\E\left[ \left\| \bv{D}\bv{u}_k \right\|^2 \middle| \bv{x}_k \right] + 2\E\left[ \left\| \bv{D}\bv{v}_k \right\|^2\right].
    \label{eqn:Dg_two_terms}
\end{equation}

Plugging \cref{thm:Dg_term1_bound,thm:Dg_term2_bound} into \cref{eqn:Dg_two_terms} we obtain a bound
\begin{align}
    \E\left[\|\bv{D} (\bv{u}_k-\bv{v}_k)\|^2\middle| \bv{x}_k \right]
    \nonumber&\leq 2\left(\frac{\|\bv{A}\|_\F^4}{RC} + \frac{\|\bv{A}\|_\F^2\|\bv{A}\|^2}{R}  +  \| \bv{A}\|^4\right) \|\bv{x}_k \|^2 
    \\&\hspace{8em} + 2\left( \frac{\|\bv{A}\|_\F^2\|\bv{b}\|^2}{R} +  \| \bv{A}\|^2 \|\bv{b} \|^2 \right).
\end{align}
By \cref{eqn:xk_bound}, $\E\left[\|\bv{x}_k\|^2\right]  \leq 2\|\bv{x}^*\|^2$.
\cref{thm:main_x}.
Moreover, $\|\bv{b}\| =\|\bv{A}\bv{x}^*\| \leq \|\bv{A}\|\|\bv{x}^*\|$.
Therefore
\begin{align}
    \E\left[\|\bv{D} (\bv{u}_k-\bv{v}_k) \|^2 \right]
    &\leq 4\left(\frac{\|\bv{A}\|_\F^4}{RC} + \frac{\|\bv{A}\|_\F^2\|\bv{A}\|^2}{R}  +  \| \bv{A}\|^4\right) \|\bv{x}^* \|^2 
    \\&\notag\hspace{8em}+ 2\left( \frac{\|\bv{A}\|_\F^2\|\bv{A}\|^2}{R} \| + \|\bv{A}\|^2 \|\bv{A} \|^2 \right)\|\bv{x}^*\|^2
    \\&= \left( 4\frac{\|\bv{A}\|_\F^4}{RC}  + 6\frac{\|\bv{A}\|_\F^2\|\bv{A}\|^2}{R} +  6 \|\bv{A}\|^4  \right) \|\bv{x}^*\|^2.
\end{align}
Now, using that $\alpha^2 = 1/\|\bv{A}\|^2$, $R \geq 2\condF^2/\cond^2$, and $C \geq 10\condF^2 / \varepsilon^2$,
\begin{align}
    \alpha^2\E\left[\|\bv{D} (\bv{u}_k-\bv{v}_k)\|^2 \right]
    &\leq \left( 4\frac{\condF^4}{RC} + 6\frac{\condF^2\cond^2}{R} +  6 \cond^4  \right) \|\bv{x}^*\|^2
    \\&\leq \left( \frac{\varepsilon^2\cond^2}{5} + 3\cond^4 +  6\cond^4  \right) \|\bv{x}^*\|^2
    \\&\leq 10 \cond^4 \|\bv{x}^*\|^2,
\end{align}
where the last inequality uses that $\varepsilon < 1$ and $\cond \geq 1$.
\end{proof}

\subsubsection{Main bound}

\mainy*

\begin{proof}
By \cref{thm:y_cost}, 
\begin{equation}\label{eqn:phi_bd}
    \phi(\bv{y}_{k+1}) = \operatorname{nnz}(\bv{y}_{k+1})\frac{\sum_i \|\bv{a}_i \|^2 |(\bv{y}_{k+1})_i|^2}{\|\bv{A}^\T\bv{y}_{k+1}\|^2}
    \leq (kR) \frac{\|\bv{D}\bv{y}_{k+1}\|^2}{\|\bv{x}_{k+1}\|^2}.
\end{equation}
By the triangle inequality, and iterating the first bound, we obtain a bound
\begin{equation}
    \|\bv{D}\bv{y}_{k+1}\| \leq \|\bv{D} \bv{y}_k\| + 
    \alpha \|\bv{D} (\bv{u}_k-\bv{v}_k) \|
    \leq \alpha \sum_{j=0}^{k} \|\bv{D} (\bv{u}_j-\bv{v}_j)\|.
\end{equation}
Then, since $(c_1 + \cdots + c_p)^2 \leq (|c_1| + \cdots + |c_p|)^2 \leq p(c_1^2 + \cdots + c_p^2)$
\begin{equation}
    \|\bv{D}\bv{y}_{k+1}\|^2
    \leq \left(\alpha \sum_{j=0}^{k} \|\bv{D}(\bv{u}_j-\bv{v}_j)\|\right)^2
    \leq (k+1)\sum_{j=0}^{k} \alpha^2\|\bv{D}(\bv{u}_j-\bv{v}_j)\|^2.
\end{equation}
Now, using \cref{thm:Dg_bound},
\begin{equation}
    \E\left[\|\bv{D} \bv{y}_{k+1}\|^2 \right]
    \leq (k+1) \cdot 10 \cond^4 \|\bv{x}^*\|^2.
\end{equation}
By Markov's inequality, with probability at least $19/20$,
\begin{equation}\label{eqn:Dyk1_bound}
\|\bv{D} \bv{y}_{k+1}\|^2 \leq 
20\cdot 10 (k+1) \cond^4 \|\bv{x}^*\|^2
\leq \const\cdot k \cond^4 \|\bv{x}^*\|^2.
\end{equation}

Again by Markov's inequality, \cref{eqn:xk_bound} implies, with probability at least 19/20
\begin{equation}
\|\bv{x}_{k+1} - \bv{x}^*\|^2 \leq 20  \cdot 6 \varepsilon^2 \| \bv{x}^* \|^2 
= \const \cdot \varepsilon^2 \| \bv{x}^* \|^2
\leq \const \cdot \|\bv{x}^*\|^2.
\end{equation}
Then, by the triangle inequality,
\begin{equation}\label{eqn:xk1_xk_bound}
\|\bv{x}_{k+1}\|
\geq \|\bv{x}^*\| - \|\bv{x}_{k+1} - \bv{x}^*\| 
\geq \left(1 - \const\right)^{1/2} \|\bv{x}^*\|
\geq \const \cdot \ \|\bv{x}^*\|.
\end{equation}
Then, by a union bound, plugging \cref{eqn:xk1_xk_bound,eqn:Dyk1_bound} into \cref{eqn:phi_bd} gives the result.
\end{proof}

\begin{figure}[htb]
    \centering
    \includegraphics[scale=.8]{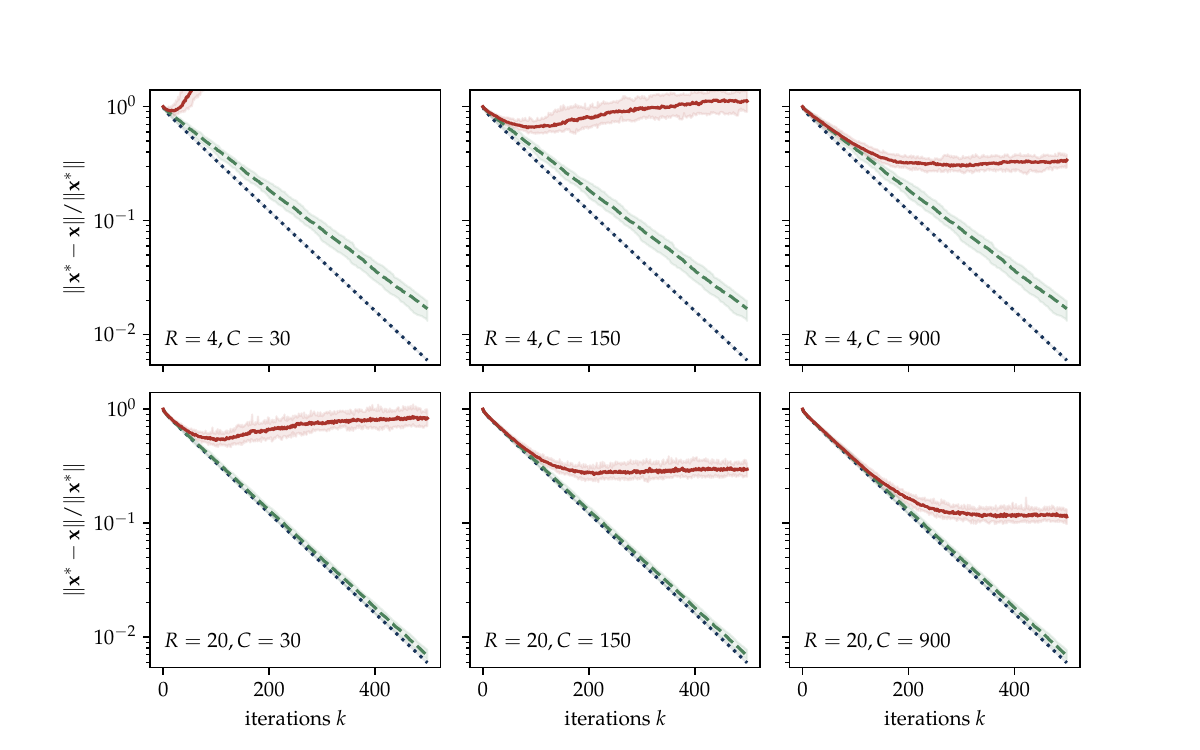}
    \caption{Median error and $5\%$-$95\%$ interval for iterates $\bv{x}_{k+1}$ \cref{eqn:xk_update} (\includegraphics[scale=.8]{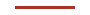}), $\E[\bv{x}_{k+1} | \mathcal{R}_0, \ldots, \mathcal{R}_k]$ (\includegraphics[scale=.8]{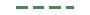}), and $\E[\bv{x}_{k+1}]$ (\includegraphics[scale=.8]{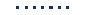}) as a function of iteration $k$.
    Note that the latter two iterates are equivalent to randomized Kaczmarz with averaging and gradient descent respectively.
    }
    \label{fig:convergence}
\end{figure}





\section{Numerical experiments}

\begin{figure}[tb]
    \centering
    \includegraphics[scale=.8]{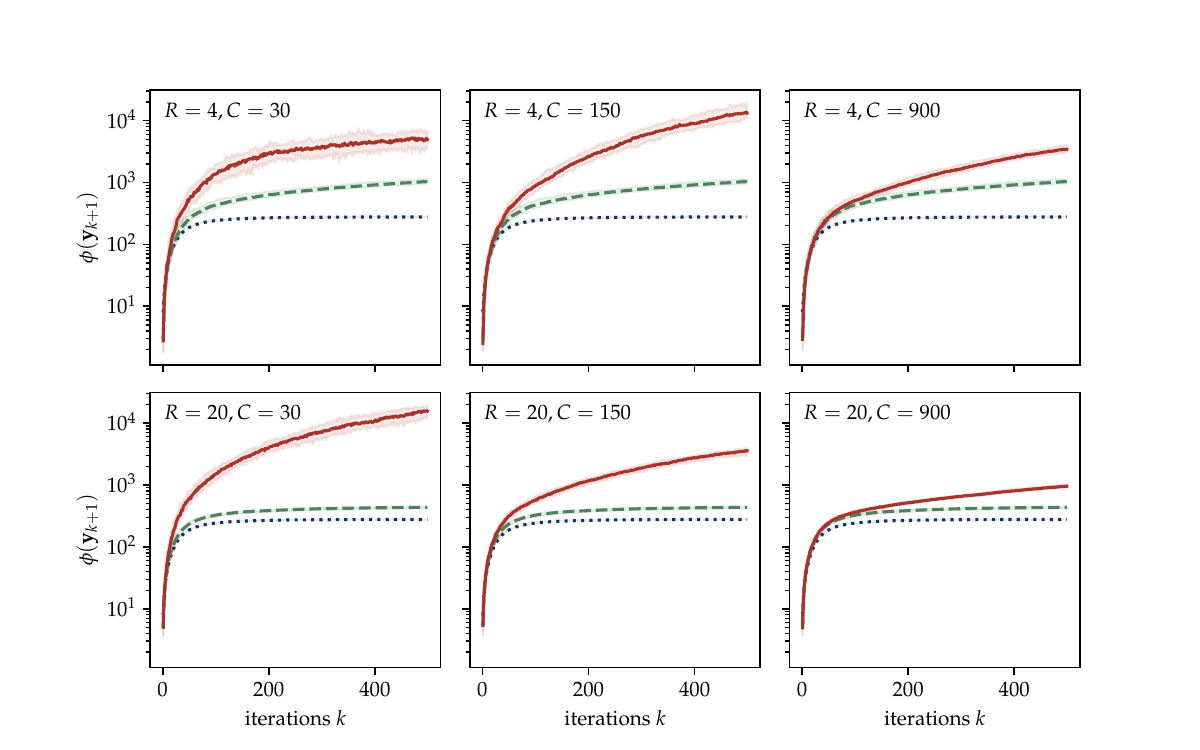}
    \caption{Median value and $5\%$-$95\%$ interval of sampling parameter $\phi(\bv{y}_{k+1})$ for iterates $\bv{y}_{k+1}$ \cref{eqn:xk_update} (\includegraphics[scale=.8]{imgs/solid.pdf}), $\E[\bv{y}_{k+1} | \mathcal{R}_0, \ldots, \mathcal{R}_k]$ (\includegraphics[scale=.8]{imgs/dash.pdf}), and $\E[\bv{y}_{k+1}]$ (\includegraphics[scale=.8]{imgs/dot.pdf}) as a function of iteration $k$.
    }
    \label{fig:sampling}
\end{figure}
We perform a basic numerical experiment on a test problem.
Here $\bv{A}$ is a $n\times d$ matrix with $n=3000$ and $d=2000$.
The left and right singular vectors are drawn from a Haar distribution. 
There are 100 nonzero singular values of the form $1+\rho_i$, where $\rho_i$ are geometrically spaced from $10^{-15}$ to $9$.
Hence, $\cond^2 \approx 100$ and $\condF^2 \approx 312.7$.
We run \cref{alg:main} to obtain $\bv{y}_{k+1}$.
We also compute the algorithm $\E[\bv{y}_{k+1} | \mathcal{R}_0, \ldots, \mathcal{R}_k]$ which corresponds to doing only row-sampling, and $\E[\bv{y}_{k+1}]$ which corresponds to no sampling.
These algorithms can be viewed as randomized Kaczmarz with averaging and gradient descent respectively.
The experiment is repeated 40 times to produce confidence intervals.

In \cref{fig:convergence} we look at the error $\|\bv{x}^* - \bv{x}_{k+1}\|/\|\bv{x}^*\|$, where $\bv{x}_{k+1} = \bv{A}^\T\bv{y}_{k+1}$ is computed exactly.
Increasing $R$ ensures the rate of convergence is similar to gradient descent, while increasing $C$ improves the maximal accuracy of the method.
The polynomial dependence of $C$ on the target accuracy $\varepsilon$ may be a limiting factor in practical situations.

Next, in \cref{fig:sampling} we plot the sampling parameter $\phi(\bv{y}_{k+1})$ which, as guaranteed by \cref{thm:sampling}, controls the cost of sampling from entries of $\bv{x}_{k+1}$.
Interestingly, we observe that this quantity seems to continue to increase, even when the iterate has reached its maximal accuracy.
This suggests that the analysis approach from \cref{sec:y_bound}, which results in a bound that does not converge to some limit as the number of iterations increases, may be reasonable.

\section{Outlook}

We have provided an elementary analysis of a randomized (quantum-inspired) algorithm for sampling from the solution to a linear system of equations.
Our analysis simplifies the analyses in a number of past works.
For simplicity, we opted only to consider consistent systems.
One direction for future work would be to extend our analysis to inconsistent systems and ridge-regression problems. 
It would also be interesting to understand whether the use of momentum or other acceleration techniques can improve the rates obtained in this paper.

\section*{Acknowledgments}
The authors would like to thank Atithi Acharya, Brandon Augustino, Pradeep Niroula, and Raymond Putra for feedback in early stages of this work.
The authors also thank Changpeng Shao and Ashley Montanaro, for helpful discussions on their work.

\vfill

\section*{Disclaimer}
This paper was prepared for informational purposes by the Global Technology Applied Research center of JPMorgan Chase \& Co. This paper is not a merchandisable/sellable product of the Research Department of JPMorgan Chase \& Co. or its affiliates. Neither JPMorgan Chase \& Co. nor any of its affiliates makes any explicit or implied representation or warranty and none of them accept any liability in connection with this paper, including, without limitation, with respect to the completeness, accuracy, or reliability of the information contained herein and the potential legal, compliance, tax, or accounting effects thereof. This document is not intended as investment research or investment advice, or as a recommendation, offer, or solicitation for the purchase or sale of any security, financial instrument, financial product or service, or to be used in any way for evaluating the merits of participating in any transaction.

\clearpage

\printbibliography

\appendix

\section{Deferred proofs}
\label{sec:deferred_proofs}

In this section, we provide the proofs of the key lemmas used above. 
These proofs are mostly just straightforward (albeit tedious) computations of expectations.

\subsection{Basic facts}

We will repeatedly use a basic fact about iid sums.
\begin{lemma}\label{thm:expected_norm_sum}
Let $\bv{Z}_1, \ldots \bv{Z}_m$ iid copies of a random matrix $\bv{Z}$.
Then
\begin{equation*}
    \E\left[\left(\frac{1}{m}\sum_{i=1}^{m} \bv{Z}_i \right)^\T\left(\frac{1}{m}\sum_{i=1}^{m} \bv{Z}_i \right) \right]
    = \frac{1}{m} \mathbb{E}\left[ \bv{Z}^\T\bv{Z} \right] + \left(1 - \frac{1}{m}\right) \E \left[\bv{Z}\right] ^\T\E \left[\bv{Z}\right]
\end{equation*}
and hence
\begin{equation*}
    \E\left[\left\|\frac{1}{m}\sum_{i=1}^{m} \bv{Z}_i \right\|_\F^2 \right]
    = \frac{1}{m} \mathbb{E}\left[ \|\bv{Z}\|_\F^2 \right] + \left(1 - \frac{1}{m}\right) \|\E \left[\bv{Z}\right]\|_\F^2.
\end{equation*}
\end{lemma}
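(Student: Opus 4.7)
The plan is a direct first-and-second moment expansion of the iid sum, exploiting the fact that the quadratic form $\bv{S}^\T \bv{S}$ (with $\bv{S} := m^{-1}\sum_i \bv{Z}_i$) breaks into diagonal and off-diagonal pieces whose expectations are controlled by $\E[\bv{Z}^\T \bv{Z}]$ and $\E[\bv{Z}]^\T\E[\bv{Z}]$ respectively. Concretely, I would write
\begin{equation*}
\left(\frac{1}{m}\sum_{i=1}^m \bv{Z}_i\right)^{\!\T}\!\left(\frac{1}{m}\sum_{j=1}^m \bv{Z}_j\right)
= \frac{1}{m^2}\sum_{i=1}^m \bv{Z}_i^\T \bv{Z}_i + \frac{1}{m^2}\sum_{i\neq j} \bv{Z}_i^\T \bv{Z}_j,
\end{equation*}
then take expectations term by term. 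The first sum contributes $m$ identical terms, each equal in expectation to $\E[\bv{Z}^\T \bv{Z}]$, yielding $(1/m)\E[\bv{Z}^\T \bv{Z}]$. For the off-diagonal sum, independence of $\bv{Z}_i$ and $\bv{Z}_j$ for $i\neq j$ gives $\E[\bv{Z}_i^\T \bv{Z}_j] = \E[\bv{Z}]^\T \E[\bv{Z}]$, and there are $m(m-1)$ such pairs, producing the factor $(1-1/m)$. Adding the two contributions yields the first displayed identity.

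The second identity follows immediately from the first by taking the trace of both sides and using the standard fact $\|\bv{M}\|_\F^2 = \operatorname{tr}(\bv{M}^\T \bv{M})$ together with linearity of expectation and trace, so $\E[\|\bv{S}\|_\F^2] = \operatorname{tr}\,\E[\bv{S}^\T \bv{S}]$ and likewise $\|\E[\bv{Z}]\|_\F^2 = \operatorname{tr}(\E[\bv{Z}]^\T \E[\bv{Z}])$ and $\E[\|\bv{Z}\|_\F^2] = \operatorname{tr}\E[\bv{Z}^\T \bv{Z}]$.

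There is essentially no obstacle here: the only mild care needed is to verify that all entries have finite second moments so that Fubini and the independence factorization $\E[\bv{Z}_i^\T \bv{Z}_j] = \E[\bv{Z}_i]^\T \E[\bv{Z}_j]$ apply entrywise. Since the lemma is invoked in the paper only for random matrices built from finitely many bounded quantities (rescaled rows/columns of $\bv{A}$), this is automatic and can be dispatched in one sentence. The proof should be only a few lines in total.
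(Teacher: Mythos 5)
Your proof is correct. The paper states this lemma without proof (it appears as a ``basic fact'' in the appendix with no accompanying argument), and your diagonal/off-diagonal decomposition of $\bv{S}^\T\bv{S}$ followed by taking traces is exactly the standard argument one would supply; the integrability remark at the end is appropriate and indeed harmless in the paper's setting since the underlying random matrices are supported on finitely many values.
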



We also recall a bound for approximate matrix multiplication; see e.g. \cite[Lemma 4]{drineas_kannan_mahoney_06}.
\begin{imptheorem}
\label{thm:approxMM}
Let $\mathcal{S}$ be an ordered list of $S$ indices each sampled with replacement so that, for each $s\in \mathcal{S}$, $\Pr[s = i] = p_i$, $i=1, \ldots, m$.
Given matrices $\bv{X}$ and $\bv{Y}$ with shared inner dimension $m$, define
\begin{equation*}
\bv{Z} = \sum_{s\in \mathcal{S}} \frac{1}{p_s}\bv{X}\bv{e}_s \bv{e}_s^\T\bv{Y}.
\end{equation*}
Then $\E[\bv{Z}] = \bv{X}\bv{Y}$ and
\begin{equation*}
\E \left[ \| \bv{X}\bv{Y} - \bv{Z} \|_\F^2 \right]
= \frac{1}{S} \sum_{i=1}^{m} \frac{\|\bv{X} \bv{e}_i\|^2\|\bv{e}_i^\T \bv{Y}\|^2}{p_i} 
- \frac{1}{S}\|\bv{X}\bv{Y}\|_\F^2.
\end{equation*}
\end{imptheorem}

\subsection{Proofs of accuracy bounds}
\label{sec:deferred_proofs:accuracy}

\Mkbd*
\begin{proof}
We drop the subscript $k$ for simplicity.
    First note that 
    \begin{equation}
        \bv{M} = \frac{1}{R} \sum_{r\in \mathcal{R}} \frac{1}{p_r}\bv{A}^\T \bv{e}_r \bv{e}_r^\T \bv{A}
    \end{equation}
    Clearly $\E[p_r^{-1} \bv{a}_{r} \bv{a}_{r}^\T] = \bv{A}^\T\bv{A}$.
    Now, observe that 
    \begin{align*}
        \E\left[ \left( \frac{1}{p_r} \bv{A}^\T\bv{e}_r\bv{e}_r^\T\bv{A} \right)^2 \right]
        = \E\left[ \frac{\|\bv{e}_r^\T\bv{A}\|^2}{p_{r}^2} \bv{a}_{r} \bv{a}_{r}^\T \right]
        = \sum_{r=1}^{n} \frac{\|\bv{e}_r^\T\bv{A}\|^2}{p_r} \bv{A}^\T\bv{e}_r\bv{e}_r^\T\bv{A}
        = \|\bv{A}\|_\F^2 \bv{A}^\T\bv{A}.
    \end{align*}
    The result then follows by \cref{thm:expected_norm_sum}.
\end{proof}

\colboundsingle*

\begin{proof}
We drop the subscripts on $k$ for simplicity.
First, note that 
\begin{equation}
    \bv{M} = \frac{1}{R} \sum_{r\in \mathcal{R}} \frac{1}{p_r}\bv{A}^\T \bv{e}_r \bv{e}_r^\T \bv{A}
    ,\quad
    \bv{z} = \frac{1}{C}\sum_{c\in\mathcal{C}}\frac{1}{q_c}\bv{M}\bv{e}_c\bv{e}_c^\T\bv{x}.
\end{equation}
Then, by \cref{thm:approxMM} (with $\bv{X} = \bv{M}$ and $\bv{Y} = \bv{x}$), 
\begin{equation}
    \E[ \| \bv{M}\bv{x} - \bv{z}\|^2 | \bv{M},\bv{x} ] 
    \leq \frac{1}{C} \sum_{c=1}^{d} \frac{\|\bv{M}\bv{e}_c\|^2\| \bv{e}_c^\T\bv{x}\|^2}{q_c} 
    = \frac{\|\bv{A}\|_\F^2}{C} \sum_{c=1}^{d} \frac{\|\bv{M}\bv{e}_c\|^2\| \bv{e}_c^\T\bv{x}\|^2}{\|\bv{A}\bv{e}_c\|^2}.
\end{equation}
Next, using the definition of $\bv{M}$ and again by \cref{thm:approxMM} (with $\bv{X} = \bv{A}^\T$ and $\bv{Y} = \bv{A}\bv{e}_c$),
\begin{equation}
    \E[ \| \bv{A}^\T \bv{A}\bv{e}_c - \bv{M}\bv{e}_c \|^2 ]
    \leq \frac{1}{R} \sum_{r=1}^{n} \frac{\|\bv{e}_r^\T\bv{A}\|_\F^2 |\bv{e}_r^\T\bv{A}\bv{e}_c|^2}{p_r}
    = \frac{\|\bv{A}\|_\F^2}{R} \sum_{r=1}^{n} |\bv{e}_r^\T\bv{A}\bv{e}_c|^2
    =\frac{\|\bv{A}\|_\F^2\|\bv{A}\bv{e}_c\|^2}{R}.
\end{equation}
Hence, since $\E[\bv{M}\bv{e}_c] = \bv{A}^\T\bv{A}\bv{e}_c$,
\begin{equation}
    \E[ \| \bv{M}\bv{e}_c \|^2 ]
    = \E[ \| \bv{M}\bv{e}_c - \bv{A}^\T\bv{A}\bv{e}_c \|^2 ] + \| \bv{A}^\T\bv{A}\bv{e}_c \|^2
    \leq
    \frac{\|\bv{A}\|_\F^2\|\bv{A}\bv{e}_c\|^2}{R} + \|\bv{A}\|^2\|\bv{A}\bv{e}_c\|^2
    .
\end{equation}
Combining the above equations, we get a bound 
\begin{equation}
    \E[ \| \bv{M}\bv{x} -  \bv{z}\|^2 | \bv{x} ] 
    \leq \frac{\|\bv{A}\|_\F^2}{C} \left( 
 \frac{\|\bv{A}\|_\F^2 \|\bv{x}\|^2}{R} + \|\bv{A}\|^2\|\bv{x}\|^2\right),
\end{equation}
as desired.
\end{proof}

\subsection{Proofs of sampling bounds}
\label{sec:deferred_proofs:sampling}

To prove \cref{thm:Dg_bound}, we use two lemmas, which we will prove in a moment.

\Dgtermabound*
\begin{proof}
We drop the subscripts on $k$ for simplicity.
Define 
\[
\hat{\bv{u}}_r := \frac{1}{p_r}\bv{D}\bv{e}_r \left(\frac{1}{C}\sum_{c\in\mathcal{C}_{k}} \frac{1}{q_{c}} (\bv{e}_r^\T\bv{A}\bv{e}_{c})( \bv{e}_c^\T \bv{x}) \right).
\]
First, observe that
\begin{equation}
    \E\left[ \hat{\bv{u}}_r \right]
    =\E\left[ \E\left[\frac{1}{p_r} \bv{D}\bv{e}_r \left(\frac{1}{C}\sum_{c\in\mathcal{C}_{k}} \frac{1}{q_{c}} (\bv{e}_r^\T\bv{A}\bv{e}_{c})( \bv{e}_c^\T \bv{x}) \right) \middle| r \right]\right]
    = \E\left[ \frac{1}{p_r}\bv{D}\bv{e}_r \bv{e}_r^\T\bv{A}\bv{x} \right]
    = \bv{D} \bv{A} \bv{x}
\end{equation}
and hence, since $\|\bv{D}\| \leq \|\bv{A}\|$,
\begin{equation}
    \| \E[ \hat{\bv{u}}_r  | \bv{x} ]\|^2
    = \| \bv{D} \bv{A} \bv{x} \|^2
    \leq \| \bv{A}\|^4 \|\bv{x} \|^2.
\end{equation}
Now, note that by \cref{thm:approxMM} (with $\bv{X} = \bv{e}_r^\T \bv{A}$ and $\bv{Y} = \bv{x}$),
\begin{align}
    \E\left[\left|\frac{1}{C}\sum_{c\in\mathcal{C} }\frac{1}{q_{c}} \bv{e}_r^\T \bv{A} \bv{e}_{c} \bv{e}_c^\T \bv{x} \right|^2 \middle| r,\bv{x}  \right]
    &\leq \frac{1}{C}\sum_{c=1}^{d}\frac{|\bv{e}_r^\T \bv{A} \bv{e}_{c}|^2|\bv{e}_c^\T \bv{x}|^2}{q_c} + \|\bv{e}_r^\T \bv{A} \bv{x}\|^2
    \\&= \frac{\|\bv{A}\|_\F^2}{C}\sum_{c=1}^{d}\frac{|\bv{e}_r^\T \bv{A} \bv{e}_{c}|^2|\bv{e}_c^\T \bv{x}|^2}{\|\bv{A}\bv{e}_c\|^2} + \| \bv{e}_r^\T\bv{A} \bv{x} \|^2.
\end{align}
Then, since $\bv{D}\bv{e}_r = \|\bv{e}_r^\T\bv{A}\|^2 \bv{e}_r$,
\begin{align}
    \E\left[ \|\hat{\bv{u}}_r\|^2  \middle| \bv{x} \right]
    &= \E\left[\frac{1}{p_r^2} \left\|\bv{D}\bv{e}_r \left( \frac{1}{C}\sum_{c\in\mathcal{C}_{j}} \frac{1}{q_{c}} (\bv{e}_r^\T \bv{A} \bv{e}_{c})( \bv{e}_c^\T\bv{x}) \right)^2 \right\|^2  \middle| \bv{x} \right]
    \\&= \E\left[\frac{\|\bv{e}_r^\T\bv{A}\|^2}{p_r^2} \E\left[\left|\frac{1}{C}\sum_{c\in\mathcal{C}} \frac{1}{q_{c}} \bv{e}_r^\T \bv{A} \bv{e}_{c} \bv{e}_c^\T\bv{x} \right|^2 \middle| r,\bv{x} \right] \middle| \bv{x} \right]
    \\&\leq \E\left[\frac{\|\bv{e}_r^\T\bv{A}\|^2}{p_r^2} \left( \frac{\|\bv{A}\|_\F^2}{C}\sum_{c=1}^{d}\frac{|\bv{e}_r^\T \bv{A} \bv{e}_{c}|^2|\bv{e}_c^\T\bv{x}|^2}{\|\bv{A}\bv{e}_c\|^2} + \|\bv{e}_r^\T\bv{A} \bv{x} \|^2 \right)\middle| \bv{x} \right]
    \\&= \sum_{r=1}^{n} \frac{\|\bv{e}_r^\T\bv{A}\|^2}{p_r} \left(  \frac{\|\bv{A}\|_\F^2}{C}\sum_{c=1}^{d}\frac{|\bv{e}_r^\T \bv{A} \bv{e}_{c}|^2|\bv{e}_c^\T\bv{x}|^2}{\|\bv{A}\bv{e}_c\|^2} + \|\bv{e}_r^\T\bv{A} \bv{x} \|^2 \right)
    \\&= \underbrace{\frac{ \|\bv{A}\|_\F^4}{C} \sum_{r=1}^{n}  \sum_{c=1}^{d}\frac{|\bv{e}_r^\T \bv{A} \bv{e}_{c}|^2|(\bv{x}_j)_c|^2}{\|\bv{A}\bv{e}_c\|^2}}_{\text{term a}} + \underbrace{\|\bv{A}\|_\F^2\sum_{r=1}^{n} \|\bv{e}_r^\T\bv{A} \bv{x} \|^2}_{\text{term b}} .
    \label{eqn:term1_two_terms}
\end{align}
Next we bound ``term a'' in \cref{eqn:term1_two_terms}.
Since $\|\bv{A}\bv{e}_c\|^2 = \sum_{r=1}^{n} |\bv{e}_r^\T \bv{A} \bv{e}_{c}|^2$,
\begin{align}
    \frac{\|\bv{A}\|_\F^4}{C} \sum_{r=1}^{n}  \sum_{c=1}^{d}\frac{|\bv{e}_r^\T \bv{A} \bv{e}_{c}|^2|\bv{e}_c^\T\bv{x}|^2}{\|\bv{A}\bv{e}_c\|^2}
    = \frac{\|\bv{A}\|_\F^4}{C}\sum_{c=1}^{d}\frac{\|\bv{A}\bv{e}_c\|^2|\bv{e}_c^\T\bv{x}|^2}{\|\bv{A}\bv{e}_c\|^2} 
    = \frac{\|\bv{A}\|_\F^4}{C} \|\bv{x}\|^2
\end{align}
Now, we bound ``term b'' in \cref{eqn:term1_two_terms}:
\begin{align}
    \|\bv{A}\|_\F^2\sum_{r=1}^{n} \|\bv{e}_r^\T\bv{A} \bv{x} \|^2
    = \|\bv{A}\|_\F^2 \|\bv{A}\bv{x}\|^2
    \leq \|\bv{A}\|_\F^2 \|\bv{A}\|^2 \|\bv{x}\|^2.
\end{align}
Hence, by \cref{thm:expected_norm_sum},
\begin{align}
    \E\left[ \left\| \bv{D}\bv{u} \right\|^2 \middle| \bv{x} \right]
    &= \frac{1}{R} \E\left[ \|\hat{\bv{u}}_r \|^2 \middle| \bv{x} \right]
    + \left(1 - \frac{1}{R} \right) \left\| \E\left[ \hat{\bv{u}}_r \middle| \bv{x} \right] \right\|^2
    \\&\leq \frac{\|\bv{A}\|_\F^4}{RC} \|\bv{x}\|^2 + \frac{\|\bv{A}\|_\F^2\|\bv{A}\|^2}{R}\|\bv{x}\|^2 +  \| \bv{A}\|^4 \|\bv{x} \|^2,\label{eqn:Dg_term1_bound}
\end{align}
as desired.
\end{proof}

\Dgtermbbound*
\begin{proof}
Define 
\[
\hat{\bv{v}}_r := \frac{1}{p_r} \bv{D}\bv{e}_r \bv{b}_r.
\]
\begin{equation}
    \E[ \hat{\bv{v}}_r ]
    = \E\left[ \frac{1}{p_r}\bv{D}\bv{e}_r \bv{b}_r\right]
    = \sum_{r=1}^{n} \bv{D}\bv{e}_r \bv{b}_r
    = \bv{D} \bv{b}
\end{equation}
and hence, since $\|\bv{D}\| \leq \|\bv{A}\|$,
\begin{equation}
    \| \E[ \hat{\bv{v}}_r  | \bv{x} ]\|^2
    = \| \bv{D} \bv{b} \|^2
    \leq \| \bv{A}\|^2 \|\bv{b} \|^2.
\end{equation}
Likewise
\begin{align}
    \E\left[ \| \hat{\bv{v}}_r \|^2  \right]
    &= \E\left[ \frac{1}{p_r^2} \| \bv{D} \bv{e}_r \bv{b}_r\|^2   \right]
    \\&= \E\left[ \frac{\|\bv{e}_r^\T\bv{A}\|^2}{p_r^2}  |\bv{b}_r|^2   \right]
    \\&= \sum_{r=1}^{n} \frac{\|\bv{e}_r^\T\bv{A}\|^2}{p_r}  |\bv{b}_r|^2  
    \\&= \|\bv{A}\|_\F^2 \sum_{r=1}^{n}|\bv{b}_r|^2  
    \\&=  \|\bv{A}\|_\F^2 \|\bv{b}\|^2.
\end{align}
Hence, by \cref{thm:expected_norm_sum},
\begin{equation}
    \E\left[ \left\| \bv{D} \bv{v}_k \right\|^2 \middle| \bv{x}_k \right]
    = \frac{1}{R} \E\left[ \| \hat{\bv{v}}_r \|^2  \right]
    + \left(1 - \frac{1}{R}\right) \left\| \E\left[ \hat{\bv{v}}_r \right] \right\|^2
    \leq \frac{\|\bv{A}\|_\F^2}{R} \|\bv{b}\|^2 +  \| \bv{A}\|^2 \|\bv{b} \|^2
    \label{eqn:Dg_term2_bound}
\end{equation}
as desired.
\end{proof}





\section{Sampling}
\label{sec:vector_sampling}

For completeness, we describe the algorithm underlying \cref{thm:sampling} which allows us to sample from $\bv{x} = \bv{A}^\T\bv{y}$, given a sparse vector $\bv{y}$ and appropriate sample/query access to $\bv{A}$.
Our exposition follows \cite[\S4]{tang_19}.

\subsection{Rejection sampling}

Rejection sampling is a standard technique for converting samples from a distribution $P$ into samples from a distribution $Q$, given access to a function $r(j) = Q(j) / (MP(j))$, where $M$ is some constant for which $Q(j) \leq M P(j)$.
The algorithm is described in \cref{alg:rejection} and requires an expected $M$ iterations to terminate.

\begin{algorithm}[ht!]
\caption{Rejection sampling}
\label{alg:rejection}
\begin{algorithmic}[1]
\Require Sample access to $P$, function $r(j)$
    \While{true}
    \State Sample $j\sim P$
    \State break with probability $r(j)$
    \EndWhile
\Ensure $j \sim Q$
\end{algorithmic}
\end{algorithm}

\subsection{Sampling from a linear combination}

We would like to sample from $Q = \mathcal{D}_{\bv{x}}$.
We can write $Q$ explcitly as 
\begin{equation}
Q(s) = \frac{|\bv{a}_{*,s}^\T\bv{y}|^2}{\|\bv{A}^\T\bv{y}\|^2}.
\end{equation}
While we can evaluate $Q(j)$ for a single value of $j$ efficiently, sampling from $Q$ is harder.
To do this, we will make use of rejection sampling with 
\begin{equation}
P(j) := 
\frac{\sum_{i} |y_i|^2|\bv{a}_{i,j}|^2}{\sum_{i}|y_{i}|^2\|\bv{a}_{i}\|^2} .
\end{equation}
Then, setting
\begin{equation}
M 
:= \frac{s\sum_{i} |y_i| \| \bv{a}_i\|^2}{\|\bv{A}^\T\bv{y}\|^2}
= \phi(\bv{y}),
\end{equation}
we see that we can compute
\begin{equation}
r(j) 
:= \frac{Q(j)}{M P(j)}
= \frac{|\bv{a}_{*,j}^\T\bv{y}|^2}{\|\bv{A}^\T\bv{y}\|^2}
\cdot \frac{\|\bv{A}^\T\bv{y}\|^2}{j\sum_{i} |y_i| \| \bv{a}_i\|^2} \cdot \frac{\sum_{i}|y_{i}|^2\|\bv{a}_{i}\|^2}{\sum_{i} |y_i|^2|\bv{a}_{i,j}|^2}
= \frac{|\bv{a}_{*,j}^\T\bv{y}|^2}{s\sum_{i} |y_i|^2|\bv{a}_{i,j}|^2},
\end{equation}
using $O(s)$ time and queries to $\bv{A}$.

\begin{algorithm}[ht!]
\caption{Sampling from a linear combination}
\label{alg:P_def}
\begin{algorithmic}[1]
\Require $\bv{y}$, sample/query access to $\bv{A}$
\State Sample row $r$ so that 
$\Pr[r = i] \propto |y_i|\|\bv{a}_i\|^2$
\Comment{Use rejection sampling}
\State Sample a column $c \sim \mathcal{D}_i^{\textup{col}}$ \Comment{$\Pr[c = j | r=i] \propto |\bv{a}_{i,j}|^2$}
\Ensure $c\sim P$
\end{algorithmic}
\end{algorithm}

To generate samples from $P$, we use \cref{alg:P_def}.
The correctness of this algorithm is verified by a direct computation. 
Indeed, by the law of total probability,
\begin{equation}
P(j)
= \Pr[c = j] 
= \sum_{i}  \Pr[a = j | r = i] \Pr[r = i]
= \sum_{i} \frac{|\bv{a}_{i,j}|^2}{\|\bv{a}_i\|^2}\cdot \frac{|y_i|^2\|\bv{a}_i\|^2}{\sum_{i'}|y_{i'}|^2\|\bv{a}_{i'}\|^2}
= \frac{\sum_{i} |y_i|^2|\bv{a}_{i,j}|^2}{\sum_{i}|y_{i}|^2\|\bv{a}_{i}\|^2} .
\end{equation}

First, we must each $|y_i|\|\bv{a}_i\|^2$, which requires $O(s)$ time (assuming access to the relevant row norms).
The second step requires a single sample from $\bv{A}$.
The candidate sample $c$ is then used by rejection sampling.
Computing $r(c)$ requires $O(s)$ queries/time.
Finally, rejection sampling runs for an expected $O(\phi(\bv{y}))$ steps, resulting in the $O(s\phi(\bv{y}))$ runtime.

\section{Possible issues with \cite{shao_montanaro_22,shao_montanaro_23}}
\label{sec:SM_error}

The rates claimed in \cite{shao_montanaro_22,shao_montanaro_23} are similar to or better than the rates we prove in this paper. 
However, we were unable to follow several key steps in the paper. In the remainder of this section, we outline some apparent gaps in the analysis.

In a personal communication~\cite{shao_montanaro_25_personal} with the authors, they confirmed the following issues. In particular, they acknowledged that the techniques of~\cite{shao_montanaro_23} do not directly lead to the claimed rate of $\widetilde{O}(\condF^4\cond^2 / \varepsilon^2)$. As a consequence, our analysis actually yields a better complexity than the aforementioned works.

In what follows, we follow the notation of \cite{shao_montanaro_22,shao_montanaro_23}; i.e. a $\dagger$ indicates the transpose and $\tilde{b}_i = b_i / \|A_{i*}\|$ and $\tilde{A}_{i*} = A_{i*} / \|A_{i*} \|$.

\subsection{Runtime}
First, it is somewhat unclear exactly what the algorithm being analyzed is. 
In particular, while equation (19) in \cite{shao_montanaro_23} provides an update for $\bv{x}_{k+1}$, it is in terms of some $\mathcal{T}_k$ and diagonal matrix ${D}_i$, neither of which are clearly defined.
As far as we can tell $\mathcal{T}_k$ is the same as mentioned near (9) for Kaczmarz with averaging; ``each index $i\in[m]$ is put into $\mathcal{T}_k$ with probability proportional to $\|A_{i*}\|^2$''.
Below (19), it is said that ``the definition of $D_i$ is similar to (12)''. In (12), a matrix $D$ is defined, which sub-samples $d$ columns proportional to the squared column norms.

In the proof of Theorem 13, the update for $\bv{y}_{k+1}$ is then given as 
\begin{equation}
    \bv{y}_{k+1} = \bv{y}_k + \frac{1}{2} \sum_{i\in\mathcal{T}_k} \frac{\tilde{b}_i - \langle \tilde{A}_{i*}|{D}_i {A}^\dagger|\bv{y}_k\rangle}{\|A_{i*}\|} \bv{e}_i.
\end{equation}
The size of $\mathcal{T}_k$ is $\widetilde{O}(\kappa_\F^2 / \kappa^2)$.
Hence, $\bv{y}_{k+1}$ is $s = \widetilde{O}(k \kappa_\F^2 /\kappa^2)$-sparse.
This agrees with the sparsity stated in the proof.

Next, note that the number of nonzero entries in $D_i$ is $d = \widetilde{O}(\kappa_\F^2 / \varepsilon^2)$.
For each $i$, the dominant cost is computing $\langle \tilde{A}_{i*}|{D}_i {A}^\dagger|\bv{y}_k\rangle$.
This requires $\widetilde{O}(ds) = \widetilde{O}(\kappa_\F^4 / (\kappa^2 \varepsilon^2))$ cost.
This must then be repeated for up to $|\mathcal{T}_k| = \widetilde{O}(\kappa_\F^2 / \kappa^2)$ values of $i$.
Thus, it seems that the cost of iteration $k$ is $\widetilde{O}(k \kappa_\F^6 / (\kappa^4 \varepsilon^2))$, and the total cost $\widetilde{O}(k^2 \kappa_\F^6 / (\kappa^4 \varepsilon^2)) = \widetilde{O}(\kappa_\F^6 / \varepsilon^2)$.
This is worse than the claimed $\widetilde{O}(\kappa_\F^4\kappa^2 / \varepsilon^2)$ cost.

\subsection{Sampling time}

In Appendix A, the authors aim to bound $\phi(\bv{y}_{k+1})$.
For any two vectors $\bv{a},\bv{b}$, the authors define $\langle \bv{a} | \bv{b} \rangle_{\Lambda} := \langle \bv{a} | \Lambda |\bv{b} \rangle$, where $\Lambda := \operatorname{diag}(\|\bv{a}_{1}\|^2,\ldots, \|\bv{a}_{n}\|^2)$.
The authors then perform a variance analysis, which, at some point, results in the following sequence:
\begin{align}\label{eqn:SM_error}
    \frac{\langle \bv{b} | \bv{y}_k \rangle_{\Lambda} - \|\bv{x}_k\|^2_{\Lambda}}{\|A\|_\F^2}
    = \frac{\langle \bv{x}^* | A^\dagger | \bv{y}_k \rangle_{\Lambda} - \|\bv{x}_k\|^2_{\Lambda}}{\|A\|_\F^2}   
    = \frac{\langle \bv{x}^* | \bv{x}_k \rangle_{\Lambda} - \|\bv{x}_k\|^2_{\Lambda}}{\|A\|_\F^2}   
    \leq \langle \bv{x}^* | \bv{x}_k \rangle - \|\bv{x}_k\|^2,
\end{align}
where in the above, the authors ``used the fact that for any two vectors $\bv{a},\bv{b}$ we have $|\langle \bv{a},\bv{b}\rangle_{\Lambda}| \leq \|\Lambda\|^2 | \langle \bv{a} | \bv{b}\rangle|$''.
We are unable to follow the chain of logic in \cref{eqn:SM_error}.
In particular:
\begin{itemize}[itemsep=.1em,topsep=-.5em]
    \item 
    The quantity $\langle \bv{x}^* | \bv{x}_k \rangle_{\Lambda}$ is not well-defined; both $\bv{x}^*$ and $\bv{x}_k$ are length $d$ vectors, while $\Lambda$ is a $n\times n $ matrix.
    The mistake seems to be assuming that $A^\dagger$ and $\Lambda$ commute (which they need not, even if $n=d$).
    \item 
    The ``fact'' above is not true (even disregarding the inconsistent scaling on $\Lambda$).
    For instance, consider $\bv{a} = (1,-1)$, $\bv{b} = (1,1)$, and $\Lambda = \operatorname{diag}(2,1)$.
    Then $|\langle\bv{a}|\bv{b}\rangle_{\Lambda}| = 1$ while $\langle\bv{a}|\bv{b}\rangle = 0$.
    In fact, this issue feels very reminiscent of the fact that $\|\bv{A}^\dagger \bv{y}_{k+1} \|= \|\sum_i (\bv{y}_{k+1})_r \bv{a}_{r} \|$ can be smaller than $\sum_r |(\bv{y}_{k+1})_r|^2 \|\bv{e}_r^\T\bv{A}\|^2$.
\end{itemize}

\end{document}